\documentclass[11pt,letterpaper]{article}

\usepackage[T1]{fontenc}

\usepackage{url}
\usepackage{hyperref}
\usepackage{authblk}
\usepackage{geometry}
\usepackage{amsmath,amssymb,amsthm}
\usepackage[linesnumbered,ruled,vlined,noresetcount]{algorithm2e}
\DontPrintSemicolon
\SetArgSty{textnormal}
\SetKwInput{Notation}{Notation}
\SetKwInput{Predicates}{Predicates}
\SetKwProg{Fn}{function}{:}{end}
\SetKwInput{Variables}{Variables}
\SetKwInput{Parameter}{Parameter}
\SetKwInput{Note}{Note}
\SetKwInput{Remark}{Remark}
\SetKwInput{Initially}{Initially}

\usepackage{multirow}
\usepackage{xcolor}
\usepackage{xfrac}

\newtheorem{theorem}{Theorem}
\newtheorem{corollary}{Corollary}
\newtheorem{lemma}{Lemma}

\theoremstyle{definition}
\newtheorem{definition}{Definition}

\newtheorem{remark}{Remark}
\newtheorem{note}{Note}

\def\protocol#1{{\normalfont\textsc{#1}}\xspace}
\newcommand{\greedy}{\protocol{GreedyExp}}

\newcommand{\tif}{\textit{if\ \ }}

\newcommand{\totherwise}{\textit{otherwise}}

\newcommand{\calG}{\mathcal{G}}

\newcommand{\calE}{\mathcal{E}}
\newcommand{\calA}{\mathcal{A}}

\newcommand{\nats}{\mathbb{N}}
\newcommand{\natsp}{\mathbb{N}_{\ge 1}}

\newcommand{\vmap}{V_{\mathrm{map}}}
\newcommand{\vvis}{V_{\mathrm{vis}}}
\newcommand{\vdone}{V_{\mathrm{done}}}
\newcommand{\emap}{E_{\mathrm{map}}}
\newcommand{\timer}{\mathtt{timer}}

\newcommand{\nxt}{\mathtt{next}}

\newcommand{\vobs}{V_{\mathrm{obs}}}

\newcommand{\edel}{E_{\mathrm{del}}}
\newcommand{\pdel}{P_{\mathrm{del}}}
\newcommand{\popen}{P_{\mathrm{open}}}
\newcommand{\gmap}{G_{\mathrm{map}}}
\newcommand{\lmap}{\Lambda_{\mathrm{map}}}

\newcommand{\vtarget}{V_{\mathrm{target}}}

\newcommand{\emapone}{\emap^1}
\newcommand{\emaptwo}{\emap^2}

\newcommand{\dcur}{d_{\mathrm{cur}}}

\newcommand{\tinf}{T_{\cap}}

\newcommand{\vbl}{V_{\mathrm{blk}}}

\newcommand{\vcur}{v_{\mathrm{cur}}}

\newcommand{\id}{\mathit{id}}
\newcommand{\pin}{p_{\mathrm{in}}}

\title{
Tight Bounds on Window Size and Time for Single-Agent Graph Exploration under \texorpdfstring{$T$}{T}-Interval Connectivity
}
\date{}

\author[1]{Yuichi Sudo}
\author[2]{Naoki Kitamura}
\author[3]{Masahiro Shibata}
\author[4]{Junya Nakamura}
\author[5]{Sébastien Tixeuil}
\author[6]{Toshimitsu Masuzawa}
\author[1]{Koichi Wada}

\affil[1]{Hosei University, Tokyo, Japan}
\affil[2]{The University of Osaka, Osaka, Japan}
\affil[3]{Kyushu Institute of Technology, Fukuoka, Japan}
\affil[4]{Toyohashi University of Technology, Aichi, Japan}
\affil[5]{Sorbonne University, Paris, France}
\affil[6]{Notre Dame Seishin University, Okayama, Japan}

\begin{document}

\maketitle

\begin{abstract}
We study deterministic exploration by a single agent in $T$-interval-connected graphs, a standard model of dynamic networks in which, for every time window of length $T$, the intersection of the graphs within the window is connected.
The agent does not know the window size $T$, nor the number of nodes $n$ or edges $m$, and must visit all nodes of the graph.
We consider two visibility models, $KT_0$ and $KT_1$, depending on whether the agent can observe the identifiers of neighboring nodes.
We investigate two fundamental questions: the minimum window size that guarantees exploration, and the optimal exploration time under sufficiently large window size.

For both models, we show that a window size $T = \Omega(m)$ is necessary.
We also present deterministic algorithms whose required window size is $O(\epsilon(n,m)\cdot m + n \log^2 n)$, where $\epsilon(n,m) = \frac{\ln n}{1 + \ln m - \ln n}$.
These bounds are tight for a wide range of $m$, in particular when $m = n^{1+\Theta(1)}$.
The same algorithms also yield optimal or near-optimal exploration time:
we prove lower bounds of $\Omega((m - n + 1)n)$ in the $KT_0$ model and $\Omega(m)$ in the $KT_1$ model,
and show that our algorithms match these bounds up to a polylogarithmic factor, while being fully time-optimal when $m = n^{1+\Theta(1)}$. This yields tight bounds when parameterized solely by $n$: $\Theta(n^3)$ for $KT_0$ and $\Theta(n^2)$ for $KT_1$.
\end{abstract}

\thispagestyle{empty}
\clearpage
\addtocounter{page}{-1}

\section{Introduction}
\label{sec:introduction}
This paper studies deterministic exploration by a single mobile agent
(i.e., an entity that autonomously traverses edges from node to node)
in $T$-interval-connected graphs, a well-studied model of dynamic networks
where every time window of length $T$ has a connected intersection graph.
In this section, we review related work, formalize our problem setting, and present our contributions in Sections \ref{sec:related_work}, \ref{sec:models}, and \ref{sec:contribution}, respectively.

\subsection{Related Work}
\label{sec:related_work}

Graph exploration by a single mobile agent in unknown undirected graphs is one of the most fundamental problems in distributed computing with mobile agents. The goal is to visit every node of the graph starting from an arbitrary node. Exploration algorithms often serve as foundational tools for solving other fundamental problems, such as rendezvous~\cite{Pelc12,TZ14,PP24}, gathering~\cite{DPP14,DFP+20,SKY+20,SMK+23,BDP23}, dispersion~\cite{AM18,SSKM20,SSN+24,KS25,KKM+25}, and gossiping~\cite{MT10,BDP23}.

This problem has been studied extensively in the standard port-numbering model, where edges are locally labeled by \emph{port numbers} at each endpoint. If nodes are labeled, it is well known that a simple depth-first search explores any connected graph in $2m$ edge traversals. Panaite and Pelc~\cite{PP99} improved the move complexity to $m+3n$, where $m$ is the number of edges and $n$ is the number of nodes. Even if nodes are anonymous, a Universal Traversal Sequence (UTS)~\cite{AKJ+79} and a Universal Exploration Sequence (UXS)~\cite{koucky02,Reingold08,TZ14,Xin07} enable graph exploration, provided that an upper bound on $n$ is known. Whiteboards (i.e., local memory at each node) also enable exploration in anonymous graphs~\cite{PDD+96,YWI+03,MT10,SBN15,SOK25}. Using whiteboards, the algorithm of Priezzhev, Dhar, Dhar, and Krishnamurthy~\cite{PDD+96}, known as the rotor-router, is self-stabilizing and explores any connected graph from any initial configuration. The cover time, i.e., the number of edge traversals until all nodes are visited, is bounded by $O(mD)$~\cite{YWI+03}, where $D$ is the diameter. Sudo, Ooshita, and Kamei~\cite{SOK25} studied time--space trade-offs of deterministic and randomized self-stabilizing graph exploration.

All the above results assume that the underlying graph is static. In many modern systems, however, the network topology evolves over time due to mobility, failures, or intermittent connectivity. Such systems are commonly modeled as \emph{time-varying graphs} or \emph{temporal graphs}, where the node set is fixed and the edge set changes over discrete time steps. Casteigts, Flocchini, Quattrociocchi, and Santoro~\cite{CFQS12} introduced a unifying framework for time-varying graphs and organized temporal connectivity assumptions into a hierarchy that has been influential in distributed computing, and Michail~\cite{Mic16} surveys algorithmic aspects of temporal graphs.

In dynamic graphs, exploration can be substantially harder and may even be impossible without additional assumptions on the dynamics. One line of research considers \emph{offline} temporal-graph problems where the entire evolution is given as input and the goal is to compute an \emph{exploration schedule} (a temporal walk) that minimizes the exploration time. Michail and Spirakis~\cite{MS16} introduced the \emph{temporal graph exploration (TEXP)} problem, and Erlebach, Hoffmann, and Kammer~\cite{EHK21} studied it extensively, establishing strong inapproximability bounds and presenting algorithms for special graph classes. In contrast, in many distributed settings the future evolution is not known. Under this \emph{online} viewpoint, the algorithm must succeed under structural restrictions on the dynamics (e.g., periodicity or recurrence), as in the exploration of carrier-based time-varying graphs studied by Flocchini, Mans, and Santoro~\cite{FMS13}, where edges are induced by the periodic movements of mobile entities called carriers.


A particularly strong and algorithmically useful stability condition is \emph{$T$-interval connectivity}:
for every time window of length $T$, the intersection of the graphs in the window is connected.
Kuhn, Lynch, and Oshman~\cite{KuhnLO10} introduced this notion in distributed computing to quantify robustness in highly dynamic networks.
Intuitively, $T$-interval connectivity guarantees that in every window there exists a connected spanning subgraph that remains continuously present throughout the window, enabling algorithms to exploit a temporarily stable structure.

Exploration under $T$-interval connectivity has been investigated primarily for restricted topologies. Ilcinkas and Wade~\cite{IW18} studied exploration of $T$-interval-connected dynamic rings and obtained tight bounds on the worst-case exploration time. In their offline setting, the agent knows the entire evolution of the graph, while in the online setting they additionally require $\delta$-recurrence, meaning that every edge appears at least once in every $\delta$ consecutive time steps, to guarantee that all nodes can be visited.
Beyond rings, Ilcinkas, Klasing, and Wade~\cite{IKW14} studied exploration of
$1$-interval-connected cacti graphs (i.e., $T=1$) in the offline setting, giving a $2^{O(\sqrt{\log n})}n$-time algorithm and a $2^{\Omega(\sqrt{\log n})}n$-time lower bound.

There is also a substantial body of work on dynamic graphs under different models and objectives. Gotoh, Sudo, Ooshita, and Masuzawa~\cite{GSOM20} study exploration with \emph{partial} predictions about future edge availability, representing a middle ground between the offline and online settings. Di Luna, Dobrev, Flocchini, and Santoro~\cite{LDFS16} studied online exploration of 1-interval-connected rings with \emph{multiple} agents, focusing on minimizing the number of agents and the cover time. Building on these results, Gotoh, Sudo, Ooshita, Kakugawa, and Masuzawa~\cite{GSO+21} studied online exploration of an $i \times j$ dynamic torus by multiple agents, where each row and each column forms a 1-interval-connected ring. Finally, Gotoh, Flocchini, Masuzawa, and Santoro~\cite{GFMS21} provided tight bounds on the number of agents required to explore temporal graphs of arbitrary topology in the online setting under temporal connectivity and stronger connectivity assumptions. These works collectively highlight the impact of connectivity assumptions and the number of agents on the feasibility and complexity of exploration in dynamic networks.

Despite this progress, the minimum window size and the exploration time for single-agent
exploration on $T$-interval-connected graphs with arbitrary topology remain unknown.

\subsection{Models and Notations}
\label{sec:models}

Let $\nats = \{0,1,\dots\}$ and $\natsp = \{1,2,\dots\}$ denote the sets of non-negative and positive integers, respectively.
For real numbers $x$ and $y$, we define the integer interval $[x..y] = \{k \in \nats \mid x \le k \le y\}$.
We use $\log$ to denote the base-$2$ logarithm, and $\ln$ to denote the natural logarithm.
For any graph $G = (V,E)$, we denote by $d_G(u,v)$ the distance between nodes $u$ and $v$.



We consider a \emph{dynamic graph} $\calG = (V,\calE)$ with an \emph{underlying graph} $G=(V,E)$, where $V$ is a common set of nodes and $\calE: \nats \to 2^E$ is a function such that $\calE(t)$ denotes the set of edges that appear at time $t \in \nats$.
For each $t \in \nats$, we denote the static snapshot of $\calG$ at time $t$ by $G_t = (V,\calE(t))$.
For $T \in \natsp$, $\calG$ is said to be \emph{$T$-interval-connected}
if $\left(V, \bigcap_{i \in [t..t+T-1]} \calE(i)\right)$ is connected for all $t \in \nats$.
We say that an edge $e \in E$ \emph{appears} or \emph{is present} at time $t$ if $e \in \calE(t)$.
The number of nodes (resp.~edges) of $\calG$ refers to that of its underlying graph $G$, i.e., $|V|$ (resp.~$|E|$), which we often denote by $n$ (resp.~$m$).
We say that $u, v \in V$ are \emph{neighbors} if $\{u,v\} \in E$, and denote by
$N(v) = \{u \in V \mid \{u,v\} \in E\}$ the set of neighbors of $v$.
The degree of a node $v$ is $\delta(v) = |N(v)|$.


Each node $v \in V$ has a unique label $\id(v)$, and we use $v$ and $\id(v)$ interchangeably when clear from the context. Each edge incident to $v$ is assigned a locally unique port number. Formally, for each $v \in V$, there is a bijection $\lambda_v : N(v) \to [1..\delta(v)]$, and for $p \in [1..\delta(v)]$ we denote by $N(v,p)$ the unique neighbor $u$ with $\lambda_v(u)=p$.


For each $t \in \nats$, we denote by $\nu(t) \in V$ the node at which the agent is located at time $t$. For $t \ge 1$, we define $\pin(t)=\lambda_{\nu(t)}(\nu(t-1))$, that is, $\pin(t)$ is the \emph{incoming port} through which the agent arrived at its current location $\nu(t)$.\footnote{
We assume without loss of generality that $\nu(t-1) \in N(\nu(t))$ (i.e., the agent moves to a neighboring node at each time step), since the \emph{stay} option only consumes the window size and does not contribute to solving single-agent graph exploration on $T$-interval-connected graphs.
}
We define $\pin(0)=\bot$.
For $t \in \nats$ and $v \in V$, we define
$P(v,t)=\{\lambda_v(u) \mid \{u,v\} \in \calE(t)\}$,
that is, the set of available ports at $v$ at time $t$.

We consider two visibility models, $KT_0$ and $KT_1$, depending on whether the agent can observe the identifiers of its neighboring nodes. This distinction allows us to quantify the impact of local visibility on graph exploration in $T$-interval-connected graphs.
In the $KT_0$ model, at each time $t \in \nats$, the agent receives as input $(\id(\nu(t)), \delta(\nu(t)), P(\nu(t),t), \pin(t))$ and selects a port $p \in P(\nu(t),t)$, moving to the neighbor $N(\nu(t),p)$. Alternatively, it may choose to \emph{terminate}.
In the $KT_1$ model, the agent additionally learns $\vobs(t) = \{ v \in V \mid \{\nu(t), v\} \in \calE(t)\}$ and $\lambda_{\nu(t)}(v)$ for each $v \in \vobs(t)$, i.e., the identifiers of all neighbors of $\nu(t)$ in $G_t$ together with the corresponding port numbers.
The agent does not know the window size $T$, the number of nodes $n$, or the number of edges $m$ a priori.

\begin{table}
\centering
\caption{The required window size and exploration time}
\label{tbl:results}
\vspace{11pt}

\begin{tabular}{cccc}
\hline
& model & required window size & exploration time \\
\hline
Theorem \ref{thm:kt_zero_upper} ($\greedy_0$) & $KT_0$ & $O(\epsilon m + n \log^2 n)$ & $O((m-n+1)n+n\log^2 n)$ \\
Theorem~\ref{thm:psi_nm_lower} (lower bound) & $KT_0$ & $\Omega(m)$ & any \\
Theorem~\ref{thm:kt_zero_lower} (lower bound) & $KT_0$ &  any & $\Omega((m-n+1)n)$ \\ \hline
Theorem \ref{thm:kt_one_upper} ($\greedy_1$) & $KT_1$ & $O(\epsilon m + n \log^2 n)$ & $O(\epsilon m + n\log^2 n)$   \\
Theorem~\ref{thm:psi_nm_lower} (lower bound) & $KT_1$ &  $\Omega(m)$ & any  \\
Theorem~\ref{thm:kt_one_lower} (lower bound) & $KT_1$ &  any & $\Omega(m)$  \\
\hline
\end{tabular}
\end{table}

\subsection{Our Contribution}
\label{sec:contribution}
This paper addresses the following two natural questions for single-agent exploration on $T$-interval-connected graphs:
\begin{itemize}
\item What is the minimum window size $T$ that enables the exploration of any $T$-interval-connected graph with $n$ nodes (and $m$ edges)?
\item Given a sufficiently large $T$, what is the optimal exploration time?
\end{itemize}

To formalize the first question, we define the functions $\psi_k(n)$ and $\psi_k(n,m)$ for the $KT_k$ model, where $k \in \{0,1\}$, as follows:
\begin{definition}
For $n \ge 3$ and $m \in [n..\binom{n}{2}]$, let $\psi_k(n,m)$ denote the minimum $T$ such that there exists a deterministic algorithm under which a single agent explores every $T$-interval-connected graph with $n$ nodes and $m$ edges in the $KT_k$ model. 
We define $\psi_k(n) = \max \{ \psi_k(n,m) \mid m \in [n..\binom{n}{2}] \}$.
\end{definition}

\begin{note}
We assume $m \ge n$, although the underlying graph may be connected even when $m = n-1$.
We impose this assumption because we study exploration on $T$-interval-connected \emph{dynamic} graphs:
if $m = n-1$, every edge must be present at all times to satisfy $T \ge 1$, and hence the problem reduces to the static case.
This assumption also implies $n \ge 3$ for simple graphs.
\end{note}

Throughout this paper, for any $n \ge 3$ and $m \ge n$, we define
\[
\epsilon(n,m) = \frac{\ln n}{1 + \ln m - \ln n}
\]
which is chosen so that $n^{1 + 1/\epsilon(n,m)} = e m$.
When $n$ and $m$ are clear from the context, we simply write $\epsilon$ for $\epsilon(n,m)$.

To answer the above questions, we first present the following upper bounds.

\begin{theorem}
\label{thm:kt_zero_upper}
In the $KT_0$ model, there exist a function $\tau(n,m) = O(\epsilon(n,m)\cdot m + n \log^2 n)$ and a deterministic algorithm $\calA$ such that, for any $n \ge 3$ and $m\in[n..\binom{n}{2}]$, a single agent running $\calA$ explores any $\tau(n,m)$-interval-connected graph with $n$ nodes and $m$ edges in $O((m-n+1)n+n\log^2 n)$ time.
\end{theorem}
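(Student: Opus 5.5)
We propose to prove Theorem~\ref{thm:kt_zero_upper} by designing a greedy map-building algorithm ($\greedy_0$) that never waits, and by charging its running time and required window size to the edges it discovers. The agent maintains a map $\gmap=(\vmap,\emap)$ of nodes and port-labelled edges it has traversed, the set $\vvis$ of visited nodes, and, for each visited node, the set of \emph{open} ports, i.e., ports never traversed. The agent repeats the following. From its current node, it computes in the known map a shortest path to a visited node $v$ that has an open port $p$. It walks along this path and then takes port $p$. Whenever a map edge it intends to use is absent, it records the edge as deleted in $\edel$ and recomputes a path in $(\vmap,\emap\setminus\edel)$. When it finds no open port reachable in the current map, it resets $\edel$, which starts a new phase.

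The key invariant is the following. Suppose the algorithm runs for a time interval of length at most $T$ during which the intersection graph $H$ is connected. Then the edges recorded in $\edel$ during that interval are not edges of $H$. Consequently, as long as some node is unvisited, some visited node has an open port leading via an $H$-edge to an unvisited node or to an unexplored edge. That port is present throughout the interval, so a greedy path to it exists in $\emap\setminus\edel$.

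The time bound $O((m-n+1)n+n\log^2 n)$ follows from two counts. Each successful traversal of an open port reveals a new edge, so there are at most $m$ such events. Between two consecutive events, the walk has length at most the diameter of the current map, which is less than $n$. In $KT_0$, however, we also know the degree, so ports leading to tree edges can be distinguished from those leading to cycle edges in the amortization. We would charge the walks of length up to $n$ only to \emph{non-tree} discoveries, which gives $(m-n+1)\cdot n$. The tree discoveries are charged by a DFS-like argument that costs $O(n\log^2 n)$ overall: the greedy nearest-open-port strategy visits new nodes in an order whose total travel is bounded by the classical nearest-neighbour tour analysis, $O(n\log n)$ per scale over $O(\log n)$ scales. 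Deletions cost nothing extra, because the agent only learns of an absence when it is about to move, and it simply replans.

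The required window size is the main obstacle. We need every phase between successive resets to be short enough that a single $T$-window covers it, so that the invariant applies. Concretely, we must show that the time the agent spends before either finding a new edge or being forced to restart is $O(\epsilon m+n\log^2 n)$, uniformly over phases. Our plan is to bound the total length of greedy shortest-path walks within one window using the following trade-off. If the walks are long, they pass through many distinct nodes whose open ports are unavailable, and each such node ``consumes'' many edges. If a walk has length $\ell$ and the adversary blocks a greedy target at distance $d$, then the ball of radius $d$ in the map must contain all reachable open ports. Layering by distance, the sizes of successive balls grow geometrically, and balancing ball growth against edge count yields a depth of $O(\epsilon)$ layers per $m$ edges. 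This is where $n^{1+1/\epsilon}=em$ enters: a ball of radius $r$ with growth factor $n^{1/\epsilon}$ saturates $n$ nodes after $\epsilon$ steps. Summing the charges gives $O(\epsilon m)$, plus the $O(n\log^2 n)$ term from tree discoveries. We would set $\tau(n,m)$ to a constant times this bound and verify that any interval of that length contains a complete phase. We expect this step, namely the potential argument that ties walk lengths to edges and ball growth and produces exactly the $\epsilon m$ term, to need the most care.
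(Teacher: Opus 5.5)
Both load-bearing steps of your plan fail: the claim that deletions are free, and the phase/reset structure.

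\textbf{Deletions are not free.} The claim ``deletions cost nothing extra'' is false, and with it the claim that the walk between consecutive discoveries is shorter than $n$. When a planned map edge is absent and you replan, the distance from $\vcur$ to the nearest open port can jump by nearly $n$. The adversary can repeat this with the next edge of each newly planned path. As a sanity check: by Lemmas~\ref{lem:potential} and~\ref{lem:nodes}, if replanning never increased that distance, $\greedy_1$ would explore within $n+2n\log n$ steps, contradicting Theorem~\ref{thm:kt_one_lower} for $m=\omega(n\log n)$. Bounding the total increase caused by deletions while a spanning tree persists is the technical core of the paper and the real source of $\epsilon m+n\log^2 n$. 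Your ball-growth layering does not supply it. The paper's argument runs as follows. Take the edges whose deletion raises the distance-to-target by at least $2k-1$, together with the persistent tree. This graph has girth at least $2k+1$: on a shorter cycle, when its first edge is deleted, the rest of the cycle gives either a short detour or a short path to an unexplored part. Theorem~\ref{thm:girth} then caps the number of such edges at about $\frac12 n^{1+1/k}$. Summing over $k$ with breakpoints near $\epsilon$ and $2\ln n$ gives $O(\epsilon m)$ plus an $O(n\log^2 n)$ tail. So the $n\log^2 n$ term comes from this tail, not from tree discoveries, which cost only $O(n\log n)$ in total (Lemma~\ref{lem:tee_dist}).

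\textbf{Your phases cannot deliver the invariant.} You reset $\edel$ only when no open port is reachable. But by your own invariant, a phase lying inside a window never gets stuck while unvisited nodes remain. Hence, before all nodes are visited, a reset can only occur after the current phase has lasted longer than $T$. That is exactly where the persistent-tree argument and any deletion bound break down. Bounding the time between consecutive discoveries would not fix this, since a phase contains many discoveries. Also, ``any interval of that length contains a complete phase'' points the wrong way: each phase must be \emph{contained} in a window.

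Since the agent knows neither $T$ nor $n,m$, the paper resets on a timer with threshold $\tau(n',m')$. Here $n'=|\vmap|$ and $m'=|\emap^1|+\lceil\sum_v|\popen(v)|/2\rceil$ are computable lower bounds on $n,m$, so every round has a persistent spanning tree. This creates rounds that time out without finishing, whose cost your amortization never addresses. The missing charging step is Lemma~\ref{lem:open_decrease}. In such a round, $\tau(n_r,m_r)\le(n_r-1)+m_r+(n_r-1)z_r+O(\epsilon(n_r,m_r)\,m_r+n_r\log^2 n_r)$, where $z_r$ counts \emph{redundant} moves (open-port traversals landing on visited nodes). For large $c$ this gives $z_r\ge\tau(n_r,m_r)/(2n)$. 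Since there are at most $m-n+1$ redundant moves overall, all non-final rounds take $O((m-n+1)n)$ time, and the last round takes at most $\tau(n,m)$.

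\textbf{Absent open ports.} Your algorithm has no rule, and no termination condition, for a target whose open port is absent on arrival. If an edge is never present, its port is never used up, so the agent can neither progress nor stop. The paper records such ports in $\pdel$, drops the node from $\vtarget$, and stops when $\vtarget=\emptyset$; this is sound by Lemma~\ref{lem:well-defined}.
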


\begin{theorem}
\label{thm:kt_one_upper}
In the $KT_1$ model, there exist a function $\tau(n,m)=O(\epsilon(n,m)\cdot m+n\log^2n)$ and a deterministic algorithm $\calA$ such that, for any $n \ge 3$ and $m\in[n..\binom{n}{2}]$, a single agent running $\calA$ explores
any $\tau(n,m)$-interval-connected graph with $n$ nodes and $m$ edges
in at most $\tau(n,m)$ time.
\end{theorem}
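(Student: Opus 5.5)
We prove Theorem~\ref{thm:kt_one_upper} by designing a greedy map-building algorithm, $\greedy_1$, which runs in doubling phases. The agent does not know $n$, $m$, or $T$, but it maintains a partial map $\gmap=(\vmap,\emap)$. Here $\vvis\subseteq\vmap$ is the set of visited nodes, and an edge is \emph{open} if one endpoint is visited and the edge has been observed but not yet traversed or confirmed. In $KT_1$, every visit to a node $v$ reveals the identifiers of its currently present neighbors. So a single visit adds all currently present edges at $v$ to the map, together with their port numbers. The agent repeatedly goes to a nearest (in $\gmap$) node $u\in\vmap\setminus\vvis$ along a shortest path in the known map, and visits it. When an edge on the planned path is absent, the agent records it as ``deleted in the current window'' and replans. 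The analysis shows that if every window of length $\tau(n,m)$ has a connected intersection, then such replanning cannot occur too often before every node is reached.

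The core of the argument is a charging/potential argument within one window $[t..t+T-1]$. Let $H$ be the connected intersection graph of the window. Every edge of $H$ is present throughout the window. Hence, as long as some node is unvisited, there is an $H$-edge from $\vvis$ to $V\setminus\vvis$, and this edge was observed the last time its visited endpoint was visited. We would charge each move of the agent either to the discovery of a new node, at most $n$ times, or to the detection of a missing edge. An edge can be detected as missing at most once per window, so there are at most $m$ such detections, and actually only $m-(n-1)$ of them matter because $H$-edges never fail. The travel cost to reach the next unvisited node is bounded by the distance in the map. Following the classical analysis of greedy nearest-neighbor exploration (Rosenkrantz–Stearns–Lewis style), the total walk length of visiting nodes in nearest-first order over a connected graph is $O(n\log n)$ per tree-like structure. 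Combined with the doubling of a guessed window size, this gives the $n\log^2 n$ term.

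The delicate term is $\epsilon m$. To obtain it, we would bound how much ``wasted'' travel the deleted edges induce. For this, we use the BFS-layer argument: if the distance to the nearest unvisited node in the surviving map is $d$, then the ball structure forces roughly $n^{1/d}$-fold growth per layer. Since the sum of degrees is $2m$, a path of length $d$ with no short detour implies $\sum$ layer-degrees $\ge n^{1+1/d}$-type growth. This is exactly where $n^{1+1/\epsilon}=em$ enters: detours longer than $\epsilon$ must consume $\Omega(m)$ edges of budget, so each of the at most $m$ edge-deletion events costs amortized $O(\epsilon)$ extra moves. Summing, one window of length $c(\epsilon m+n\log^2 n)$ suffices to complete exploration. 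Since the agent never needs to know $T$ (the greedy rule is parameter-free), the total time is at most $\tau(n,m)$.

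The main obstacle is this amortization step: proving that each detected absent edge causes only $O(\epsilon)$ amortized detour length rather than $O(n)$. I would first try a potential $\Phi=\sum_{v}$ (map distance from the current position to unvisited nodes, capped at $\epsilon$) and show that each deletion raises $\Phi$ by at most $O(\epsilon)$ in amortized terms, via the layer-growth counting inequality $\prod (1+\text{growth}_i)\le m/n$. Failing that, I would instead run a sparse-spanner-type argument. Such a graph has girth-like stretch $2\epsilon+1$ and $O(n^{1+1/\epsilon})=O(m)$ edges, which again ties the deletions to $\epsilon$.
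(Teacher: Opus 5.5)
Your algorithm and outer accounting essentially match the paper's. The agent moves greedily toward a nearest unvisited node of the map. The whole run fits inside one window, so a single spanning tree $\tinf$ is present throughout; hence no doubling or phases are needed in $KT_1$, and mentioning them contradicts your later ``parameter-free'' remark. The potential $\dcur$ drops by one on every move to a visited node. The total increase caused by first visits is $O(n\log n)$ by a nearest-neighbor (Rosenkrantz--Stearns--Lewis) argument on $\tinf$; this is what the paper's centroid lemma gives once each first-visit increase is bounded by the $\tinf$-distance to a later-visited node.

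The genuine gap is the step you flag yourself: bounding the total increase of $\dcur$ caused by edge eliminations. Counting detections (at most $m$) does not help, since one elimination can raise $\dcur$ by up to $n-1$. Your BFS-layer claim does not hold for the map. If $\tinf$ is a long path whose chords get deleted, the map is path-like, the distance to the nearest unvisited node can be large, and the layers do not grow at all. So ``amortized $O(\epsilon)$ per deletion'' has no mechanism behind it. Also, the $n\log^2 n$ term does not come from doubling.

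The missing idea is a girth lemma for the \emph{expensive} edges, not for the map. Let $X_k$ be the set of edges whose elimination raised $\dcur$ by at least $2k-1$. Then $X_k\cup E(\tinf)$ has girth at least $2k+1$. To see this, take a cycle $C$ with $|C|\le 2k$ and let $e=\{u,v\}$ be its first-eliminated edge, so the agent is at $u$ or $v$. Every other edge of $C$ with a visited endpoint is still in the map. Tree edges are always present. An edge that is eliminated later was never observed absent before, because in the paper an edge observed absent even once is discarded forever. Your rule of deleting only planned-path edges does not guarantee this monotonicity, and the argument needs it. So either all nodes of $C$ are visited and $C-e$ gives a $u$--$v$ detour of length at most $2k-1$, or walking along $C-e$ from the agent reaches an unvisited node within $2k-1$ steps. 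In both cases $\dcur$ rises by at most $2k-2$, a contradiction.

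The Moore-type bound (Theorem~\ref{thm:girth}) then gives $|X_k|\le\frac{1}{2}n^{1+1/k}$. One then sums $\sum_e\iota_E(e)\le 2\sum_{k\ge1}|X_k|$ at \emph{every} scale:
\begin{itemize}
\item $|X_k|\le m$ for $k\le\epsilon$;
\item the Moore bound for $\epsilon\le k\le 2\ln n$, totaling $O(\epsilon m)$ because $n^{1+1/\epsilon}=em$;
\item $|X_k|\le n\ln n/k+1$ for $k\ge 2\ln n$, whose harmonic sum is the true source of $O(n\log^2 n)$.
\end{itemize}
Your spanner remark points in this direction, but a single threshold at $k=\epsilon$ cannot suffice. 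Edges whose elimination costs far more than $\epsilon$ must be counted scale by scale.
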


\begin{corollary}
\label{cor:psi_nm_upper}
For any $k \in \{0,1\}$, $\psi_k(n,m) = O(\epsilon(n,m)\cdot m  + n \log^2 n)$.
\end{corollary}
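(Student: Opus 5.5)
The corollary is a direct consequence of Theorems~\ref{thm:kt_zero_upper} and~\ref{thm:kt_one_upper} together with the definition of $\psi_k(n,m)$. Fix $n \ge 3$ and $m \in [n..\binom{n}{2}]$, and let $\tau(n,m) = O(\epsilon(n,m)\cdot m + n\log^2 n)$ and $\calA$ be the function and the deterministic algorithm supplied by the appropriate theorem: Theorem~\ref{thm:kt_zero_upper} for $k=0$ and Theorem~\ref{thm:kt_one_upper} for $k=1$. Each theorem states that a single agent running $\calA$ explores every $\tau(n,m)$-interval-connected graph with $n$ nodes and $m$ edges in the $KT_k$ model, so $T=\tau(n,m)$ is a feasible value in the definition of $\psi_k(n,m)$. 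Since $\psi_k(n,m)$ is the minimum such $T$, we get $\psi_k(n,m) \le \tau(n,m) = O(\epsilon(n,m)\cdot m + n\log^2 n)$. The exploration-time guarantees in the two theorems are not needed here.

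Two small points need checking. First, the definition of $\psi_k$ requires feasibility only for the single window size $T$, while the theorems are stated for exactly $\tau(n,m)$-interval-connected graphs. This is enough, because $\psi_k(n,m)$ is a minimum over valid $T$ and exhibiting one valid value gives an upper bound. (Monotonicity is automatic anyway: a $T'$-interval-connected graph with $T' \ge T$ is also $T$-interval-connected, since intersecting over a shorter window only adds edges.) Second, the algorithm $\calA$ is uniform: it does not know $n$, $m$ or $T$, which is stronger than the existence of an algorithm for each fixed $(n,m)$ that the definition requires, so nothing further is needed. The constant hidden in the $O(\cdot)$ comes from the theorems and is independent of $n$ and $m$, so the asymptotic bound holds uniformly. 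No step is a real obstacle; all the difficulty lies in the two theorems being invoked.
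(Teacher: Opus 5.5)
Your proposal is correct and matches the paper, which states this corollary without a separate proof as an immediate consequence of Theorems~\ref{thm:kt_zero_upper} and~\ref{thm:kt_one_upper}. Exhibiting the algorithm that works for window size $\tau(n,m)=O(\epsilon(n,m)\cdot m+n\log^2 n)$ in the $KT_k$ model shows $\psi_k(n,m)\le\tau(n,m)$.
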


The above upper bounds on the minimum window size are tight for a wide range of $m$, i.e., when $m = n^{1+\Omega(1)}$, since we prove the following lower bounds.

\begin{theorem}
\label{thm:psi_nm_lower}
For any $k \in \{0,1\}$, $\psi_k(n,m) = \Omega(m)$.
\end{theorem}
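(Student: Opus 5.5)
Since the $KT_1$ model gives the agent strictly more information than $KT_0$, any $KT_0$ algorithm can be run unchanged in $KT_1$ by ignoring $\vobs(t)$. Hence $\psi_0(n,m) \ge \psi_1(n,m)$, and it suffices to prove $\psi_1(n,m)=\Omega(m)$. The plan is an adversary argument. Fix $n$, $m$ and a deterministic $KT_1$ algorithm $\calA$, and set $T=\lfloor cm\rfloor$ for a small constant $c>0$. We build, \emph{online and adaptively to $\calA$}, a $T$-interval-connected dynamic graph $\calG$ with $n$ nodes and $m$ edges in which the agent never visits a designated target node $z$. Determinism lets the adversary simulate $\calA$ against any tentative future evolution and fix the evolution only afterwards. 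This is legitimate because the agent's inputs up to time $t$ depend only on $\calE(0),\dots,\calE(t)$.

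\textbf{Construction.} The underlying graph is a dense part $H$, with $\Theta(m)$ edges on $\Theta(\sqrt m)\le n$ nodes, plus a set of \emph{gate} edges through which $z$ (and, if needed, a short path padding the node count to exactly $n$) is attached to $H$. At every time, $\calE(t)$ should contain a spanning connected subgraph consisting of a stable backbone inside $H$ together with one active gate. Connectivity of windows is maintained by the standard overlap trick: if consecutive phases of length $\ge T$ use gate sets $\{g_1\}$, $\{g_1,g_2\}$, $\{g_2\},\dots$, then every window of length $T$ meets at most two consecutive phases. Its intersection therefore still contains a common gate, so $T$-interval connectivity holds automatically. The adversary additionally removes a gate at exactly the moments when the agent is located at its $H$-endpoint. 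In $KT_1$ the agent is then never shown $z$ in $\vobs(t)$ and never has the port to $z$ in $P(\nu(t),t)$.

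\textbf{Key lemma and obstacle.} Everything reduces to one counting claim: within any window of length $T=cm$, the agent's walk cannot ``touch'' every candidate gate. Hence some gate, never switched off during the window, can be designated as the window's stable gate. With gates given by nodes this claim fails, because a walk visits all $\le n$ nodes in $O(n)$ steps. The candidates must therefore be $\Theta(m)$ \emph{edges} of $H$ that the agent must actually traverse, not merely observe. In addition, the backbone inside $H$ must be re-chosen adversarially, again subject to the overlap trick, so that revealing whether a candidate edge leads toward $z$ costs the agent a fresh traversal each time. The goal is to show that each time step eliminates $O(1)$ candidates, so $\Omega(m)$ steps are needed per window.

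I expect this step to be the main obstacle. The $KT_1$ visibility reveals neighbour identities, so the dense part must be designed so that identities carry no information about which edge is the current gate. My first attempt would be a bipartite-like $H=(A\cup B)$ in which $z$ is reached only after crossing a specific $A$–$B$ edge that the adversary toggles. The hard part is making the equivalence classes of agent views under the simulation large, of size $\Omega(m)$.
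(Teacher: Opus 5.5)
There is a genuine gap. Everything in your outline except the ``key lemma'' is routine scaffolding: the reduction $\psi_0\ge\psi_1$, simulating the deterministic agent, and overlapping gate phases. The key lemma itself is neither proved nor tied to a concrete construction, and you say yourself it is the main obstacle. The overlap trick also does not give $T$-interval connectivity automatically. A gate must be switched off whenever the agent stands at its endpoint, so the phases are dictated by the agent's walk, which is exactly what the missing lemma would have to control.

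More importantly, your reason for abandoning node gates is mistaken. That a covering walk of length $O(n)$ \emph{exists} does not mean the agent can \emph{find} one online. The adversary may decide in hindsight which edges form the always-present spanning tree. Lemma~\ref{lem:clique} shows that in a $g$-node gadget the agent can be kept away from two designated nodes for $\Omega(g^2)$ steps even under $\infty$-interval connectivity. Conversely, recasting the candidates as $\Theta(m)$ edges does not by itself create an $\Omega(m)$ bound. An edge joining the never-visited region to the rest can be used only while the agent stands at its visited endpoint, so the dangerous events are visits to at most $n$ boundary nodes. What has to be $\Omega(m)$ is the agent's travel time between such nodes, not the number of candidates.

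The missing ingredient is the paper's confinement gadget. Take $K_g(u,v)$, the $g$-clique minus $\{u,v\}$, and maintain a blocked set $\vbl$, initially $\{u,v\}$. At every step, delete all edges from $\vcur$ to $\vbl$. Whenever the agent has visited all but one unblocked node since the last update, block that remaining node. Each new block costs the agent at least $g-2-|\vbl|$ moves, so reaching $u$ or $v$ takes $\Omega(g^2)$ steps. Meanwhile, the path $u,w_1,w_2,\dots$ through the blocked nodes in blocking order is never deleted.

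The paper then assembles the lower bound as follows, with $g=\lfloor\sqrt{m/8}\rfloor$. It places the target $y$ between $x$ and $z$ and deletes $\{x,y\}$ (resp.\ $\{y,z\}$) exactly when the agent is at $x$ (resp.\ $z$). Every $x$--$z$ route avoiding $y$ must pass through trap nodes of two such gadgets. Alternating which gadget is ``fresh'' (the $\nxt$ variable) ensures the agent cannot be at both $x$ and $z$ within $T=c'm$ steps. Null edges pad the edge count, and the case $n\le m<2n$ is handled separately with a cycle. Your two-gate overlap framework is compatible with this, but without such a confinement mechanism it does not yield a proof.
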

Even if $m = n^{1+o(1)}$, 
the gap between the above upper and lower bounds is at most 
$\epsilon(n,m) + (n \log^2 n)/m = O(\log^2 n)$.
Moreover, when parameterized solely by $n$, these bounds are tight:
we obtain the following corollary by substituting $m = \binom{n}{2}$.

\begin{corollary}
\label{cor:psi_n}
For any $k \in \{0,1\}$, $\psi_k(n) = \Theta(n^2)$.
\end{corollary}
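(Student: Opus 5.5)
The corollary combines Corollary~\ref{cor:psi_nm_upper} and Theorem~\ref{thm:psi_nm_lower} at the extreme edge density. By definition, $\psi_k(n) = \max_{m \in [n..\binom{n}{2}]} \psi_k(n,m)$. The plan is to prove an $O(n^2)$ upper bound that holds uniformly over all admissible $m$, and to get the matching lower bound from the single choice $m = \binom{n}{2}$.

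\textbf{Lower bound.} Instantiate Theorem~\ref{thm:psi_nm_lower} with $m = \binom{n}{2}$. This gives $\psi_k(n) \ge \psi_k(n,\binom{n}{2}) = \Omega(\binom{n}{2}) = \Omega(n^2)$.

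\textbf{Upper bound.} We need $\psi_k(n,m) = O(n^2)$ with a constant independent of $m$. Corollary~\ref{cor:psi_nm_upper} gives $\psi_k(n,m) \le C(\epsilon(n,m)\, m + n\log^2 n)$ for an absolute constant $C$. Since $m \le \binom{n}{2}$ and $n\log^2 n = O(n^2)$, it suffices to show that $\epsilon(n,m)\, m = O(n^2)$ uniformly in $m \in [n..\binom{n}{2}]$.

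This is the only real step. The naive bound $\epsilon \le \ln n$ (valid because $m \ge n$) is too weak for dense graphs, so instead we use $n^{1+1/\epsilon} = e m$. Writing $x = \ln(m/n) \in [0, \ln((n-1)/2)]$, we have $\epsilon\, m = \frac{\ln n}{1+x}\, n e^{x}$. The function $e^x/(1+x)$ is increasing for $x \ge 0$, so the maximum over the range is attained at $m = \binom{n}{2}$. There $1 + x = 1 + \ln((n-1)/2) = \Theta(\ln n)$ for $n \ge 3$, hence $\epsilon(n,\binom{n}{2}) = O(1)$ and $\epsilon\, m = O(n^2)$.

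Taking the maximum over $m$ then gives $\psi_k(n) = O(n^2)$, and together with the lower bound, $\psi_k(n) = \Theta(n^2)$.

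The main obstacle is this monotonicity argument. Constants for small $n$ (where $\ln((n-1)/2)$ can be $0$) are absorbed because $1 + x \ge 1$ and only finitely many $n$ are affected.
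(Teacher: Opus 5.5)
Your proof is correct and follows the paper's route: you take the lower bound from Theorem~\ref{thm:psi_nm_lower} at $m=\binom{n}{2}$ and the upper bound from Corollary~\ref{cor:psi_nm_upper}. You also supply a step the paper leaves implicit in ``substituting $m=\binom{n}{2}$'': since $e^x/(1+x)$ is increasing, $\epsilon(n,m)\,m$ is nondecreasing in $m$, so the $O(n^2)$ upper bound holds uniformly over all admissible $m$, which is what the maximum defining $\psi_k(n)$ requires.
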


This paper also establishes the following lower bounds on the exploration time:

\begin{theorem}
\label{thm:kt_zero_lower}
In the $KT_0$ model, every deterministic algorithm requires $\Omega((m-n+1)n)$ time to explore every $\infty$-interval-connected graph with $n \ge 3$ nodes and $m\in[n..\binom{n}{2}]$ edges.
\end{theorem}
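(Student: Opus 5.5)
We prove the bound with an adversary construction. For given $n$ and $m$, let $k = m-n+1 \ge 1$ denote the cyclomatic number. In an $\infty$-interval-connected graph, some fixed connected spanning subgraph is present at every time step, while the remaining edges may appear and disappear arbitrarily. The plan is to build an underlying graph consisting of a long path, which stays present forever, together with $k$ extra edges that the adversary controls. Concretely, let the path be $v_0,v_1,\dots,v_{L}$, with $L = \Theta(n)$, and place the extra edges in a small region near one end, say a dense cluster among the first $O(\sqrt{k})+O(1)$ nodes. The adversary hides which extra-edge port is the ``real'' continuation, so the agent wastes time walking along the path to test alternatives. Since the agent cannot see neighbor IDs in $KT_0$, it learns the endpoint of a port only by traversing it.

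A cleaner design makes each probe cost $\Omega(n)$. Take a path $P$ of length about $n/2$ from $s$ to a hub $h$. At $h$, attach $k+1$ edges to a set of nodes arranged so that exactly one of them leads into an unexplored tail of length about $n/2$; the others lead back onto already-known structure. For example, let $h$ have $k+1$ ports to nodes $u_0,\dots,u_k$, which are also joined by a path. The permanent spanning tree is the path $s \to h$, the edge $h$--$u_0$, and the path $u_0\dots u_k$ followed by the tail. The adversary permutes, online and consistently with the agent's history, which port number at $h$ corresponds to which $u_i$; this is legal because the agent in $KT_0$ sees only port numbers. The Note's requirement that the graph be simple holds if $k+1 \le n/2$. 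For larger $k$ we instead spread the extra edges over $\Theta(n)$ hubs, and the counting changes accordingly.

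The key step, which I expect to be the main obstacle, is forcing a cost of $\Omega(n)$ per wasted probe. Traversing a wrong port from $h$ lands the agent at some $u_i$, and walking back costs only $1$. So the probes must be made expensive: each extra edge should join two nodes that are far apart in the permanent path. Then testing the edge either uses it, landing far away, or requires walking $\Omega(n)$ steps to reach its other endpoint.

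I therefore plan a different strategy. Use an underlying graph whose permanent part is a Hamiltonian path $v_1\dots v_n$ and whose $k$ chords each span distance $\Omega(n)$, for instance chords $\{v_i, v_{n/2+j}\}$. The adversary keeps each chord present but reveals its far endpoint only upon traversal. The adversary also chooses the labeling so that the last unvisited node is reachable only after the agent has traversed, or ruled out, all chords. Each chord test forces the agent across a distance of $\Omega(n)$, because after the test the agent sits at the far end and must return to test the next chord, which lies at the near end. This gives $\Omega(kn)$.

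To make ``ruling out'' chords necessary, the adversary keeps one node $x$ unvisited and holds it as a dangling leaf of indeterminate attachment. The agent can certify that it has visited $x$ only by taking the right port, and the adversary answers each port choice by an indistinguishability argument. Two instances that differ only in port labeling are indistinguishable until the differing port is taken. Formalizing this adaptive relabeling consistently, while keeping the graph $\infty$-interval-connected, is the delicate part. The case $m = n$ ($k=1$) gives an $\Omega(n)$ bound, which holds trivially since all nodes must be visited.
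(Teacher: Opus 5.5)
\paragraph{A gap at the per-probe cost.} The step you flagged as delicate fails as stated. In your final design the adversary ``keeps each chord present.'' Once a chord $\{v_i,v_{n/2+j}\}$ has been traversed, however, the agent knows its port at both endpoints and can walk straight back through it. It could also use any other present chord at $v_{n/2+j}$, which also leads back to the near side. A probe then costs $O(1)$ steps rather than $\Omega(n)$, and the claimed $\Omega(kn)$ collapses.

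The missing idea is \emph{one-sided} availability: a chord must be absent whenever the agent stands at its landing endpoint, so the only way back is the permanent path. The paper's setup is as follows.
\begin{itemize}
\item The permanent part is the path $v_1\cdots v_n$, whose edges are never removed, so $\infty$-interval connectivity is automatic.
\item It places $m_T=\min(m-n+1,|V_L|\cdot|V_R|)$ trap edges between $V_L=\{v_1,\dots,v_{\lfloor n/3\rfloor}\}$ and $V_R=\{v_{\lfloor 2n/3\rfloor},\dots,v_{n-1}\}$.
\item All trap edges at a $V_L$ node are removed while the agent is at that node.
\item The remaining edges are \emph{null} edges, absent whenever the agent is at an endpoint. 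You also need this padding, because long chords cannot absorb all $m-n+1$ extra edges when that number is large.
\end{itemize}
With this, every return from $V_L$ to $V_R$ takes at least $\lfloor 2n/3\rfloor-\lfloor n/3\rfloor=\Omega(n)$ steps.

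\paragraph{The dangling leaf.} Your ``dangling leaf of indeterminate attachment'' is unnecessary, and as stated it is not sound. In $KT_0$ the agent sees $\delta(v)$ and $P(v,t)$ at every visited node, so a single leaf's attachment point cannot float among many nodes.

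In the paper, the hidden correct port is simply the forward path edge at each node of $V_R$. The agent starts at $v_1$. For $v_i\in V_R$, the first visit to $v_{i+1}$ must use the path edge from $v_i$, because trap edges are unusable from the $V_L$ side and null edges are never available. The unused available ports at $v_i$ are indistinguishable in $KT_0$. The adversary therefore binds each unused-port choice at $v_i$, in a delayed manner, to an unused trap edge while one remains. Hence all $k_i$ trap edges at $v_i$ are traversed before $v_{i+1}$ is reached, and each one throws the agent back into $V_L$. Summing over $V_R$ gives $\Omega(m_T\cdot n)=\Omega((m-n+1)n)$, since $|V_L|\cdot|V_R|=\Theta(n^2)$.

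Note the orientation: failed probes must throw the agent \emph{backward}, away from where progress is needed. In your plan, a probe from the near end carries the agent toward the far end, and it is not clear why that is a setback at all.
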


\begin{theorem}
\label{thm:kt_one_lower}
In the $KT_1$ model, every deterministic algorithm requires $\Omega(m)$ time to explore every $\infty$-interval-connected graph with $n\ge 3$ nodes and $m\in[n..\binom{n}{2}]$ edges.
\end{theorem}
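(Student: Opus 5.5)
We argue by an adversary that builds the dynamic graph online against the fixed deterministic algorithm. The adversary keeps every edge present at all times, so the graph is $\infty$-interval-connected; the difficulty is then purely one of discovery. In the $KT_1$ model the agent sees the identifiers of all neighbors of its current node, so it can learn the whole neighborhood without moving. The adversary must therefore hide the graph structure inside the port numbering and the adjacency, so that the only way to reach the last unvisited node is through a particular port that the agent has not yet taken.

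First, handle small $m$. If $m \le 2n$, then $\Omega(m)=\Omega(n)$. Visiting all $n$ nodes needs at least $n-1$ moves, since the agent visits at most one new node per step. This case is immediate.

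For $m \ge 2n$ we use a construction. Split $V$ into a set $A$ of about $n/2$ nodes that carry almost all edges, plus a hidden node $z$. The part $A$ is a dense graph with roughly $m-n$ edges, say a near-regular graph of degree about $2m/n$ or a suitable subset of a clique. The other nodes form a path attached to $A$. The node $z$ is attached by a single edge to some node $a\in A$, through some port.

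Identifiers of unvisited nodes carry no information the algorithm can exploit. To make this precise, the adversary assigns the identifier of any node only when the agent first sees it, drawing from a large pool. The port numbering at each node of $A$ is also fixed lazily.

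The adversary's rule is as follows. When the agent at $a\in A$ takes port $p$ for the first time, the adversary decides what lies behind $p$. If possible, it is an edge to an already-seen node of $A$. Otherwise it is an edge to a fresh node.

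The pendant $z$ is connected through the last port that the agent tries among the pool of "undetermined" ports. This port is chosen at the end, at the first time the agent has tried all but one undetermined port of $A$. Until then the agent cannot reach $z$.

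Here is the issue in $KT_1$: the agent sees $\vobs(t)$, the neighbor identifiers. So the adversary must make the neighbor sets consistent without revealing which neighbor is $z$. The fix is that $z$'s identifier is revealed in $\vobs$ when the agent stands at $a$. Seeing $z$ as a neighbor is not visiting $z$, though, so the agent still needs to walk there.

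That makes the lower bound weaker. Once the agent sees $z$, it walks there in one step. So the adversary instead hides $z$ behind a path of length 2 through nodes whose identifiers look like all other $A$-neighbors. Equivalently, we require that the adjacency among $A$ be unknown: seeing an identifier $u$ at $a$ does not tell the agent $u$'s neighbors.

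Thus, for each edge $\{u,w\}$ inside $A$, the agent learns whether $w$ is adjacent to $z$ only by visiting $w$, or a neighbor of $z$. Take a graph in which $z$ is adjacent to exactly one node $w^\ast$ of a set $B$ of size $\Theta(m/n)$. Build $\Theta(n)$ such gadgets, each needing $\Theta(m/n)$ visits. Since each visit costs at least one step, and the adversary answers "not $w^\ast$" until the last candidate in each gadget, the agent needs $\Omega(n\cdot m/n)=\Omega(m)$ time. Nodes seen but not adjacent to the hidden leaves must still be counted within the $m$ edges, which the gadget density (complete bipartite $K_{s,s}$ with $s=\sqrt{m/n}$ per gadget) allows.

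The main obstacle is designing gadgets where information from $\vobs$ at one node does not shortcut the search, while matching the edge budget exactly. The fallback is to use this bipartite-gadget argument and verify that the counts balance.
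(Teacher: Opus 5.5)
\textbf{There is a genuine gap, and it is fundamental.} Your small-$m$ case is fine. The main construction cannot work, though, because you have the adversary keep every edge present at all times. That makes the dynamic graph static, and in the $KT_1$ model a static graph is explored in $O(n)$ time however identifiers and ports are assigned, lazily or not. At every visited node the agent sees the identifiers of all its neighbors, so it knows which of them it has not yet visited. A DFS that moves to such a neighbor, and otherwise backtracks along its DFS tree, visits all $n$ nodes within $2(n-1)$ moves. So no instance in which all edges are always present can force $\Omega(m)$ time when $m=\omega(n)$.

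Your own steps show this. Once $z$ is seen it is reached in one step, and each candidate in $B$ is resolved by a single visit that reveals its whole neighborhood. The gadget counting also fails on its own terms: $\Theta(n)$ gadgets, each with $\Theta(m/n)$ candidate nodes, need $\Theta(m)$ distinct nodes, far more than $n$. The $K_{s,s}$ version with $s=\sqrt{m/n}$ needs $\Theta(\sqrt{nm})>n$ nodes and offers only $s$ candidates per gadget.

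\textbf{The missing idea is that the adversary must use the dynamics.} $\infty$-interval connectivity only requires some connected spanning subgraph to be present forever; every other edge may be deleted adaptively. The paper does two things with this.
\begin{itemize}
\item It pads the edge count with null edges that are absent whenever the agent stands at an endpoint. These are neither visible nor usable in $KT_1$, yet they count toward $m$.
\item It gets the time bound from the gadget $K_g(u,v)$ with $g=\lfloor\sqrt{m}\rfloor$. A path is attached at gate $u$, and the agent starts at a trap node.
\end{itemize}
Inside the gadget (Lemma~\ref{lem:clique}), the adversary deletes every edge from the agent's current node to a set of blocked nodes, initially $\{u,v\}$. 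It blocks one more node only after the agent has visited all other unblocked nodes since the last blocking. This forces $\Omega(g^2)=\Omega(m)$ moves before either gate is reached.

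Connectivity is preserved by a spanning tree that is never deleted. It consists of the path through the blocked nodes in blocking order, with $u,v$ attached to the first blocked node and the last two remaining nodes attached to the last one. The agent's 1-hop view does not help, because the edges that matter are simply absent when it looks.
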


Thus, we obtain a tight bound on the exploration time in the $KT_0$ model whenever $m = n + \Omega(\log^2 n)$.
The bound for the $KT_1$ model is also nearly tight, with the same multiplicative gap as that for the minimum window size.
Notably, these theorems establish tight bounds on the exploration time when parameterized solely by $n$: $\Theta(n^3)$ for $KT_0$ and $\Theta(n^2)$ for $KT_1$.

Our contributions are summarized in Table~\ref{tbl:results}.
Since the upper bound results (Theorems~\ref{thm:kt_zero_upper} and~\ref{thm:kt_one_upper}) are more technically involved than the others, we present them in the first ten pages.




\section{Upper Bounds}
\label{sec:upper}

In this section, we present two algorithms, $\greedy_1$ and $\greedy_0$, for the $KT_1$ and $KT_0$ models, respectively.
Throughout this section, we define
\[
\tau(n,m) = c \cdot \lceil \epsilon(n,m) \cdot m + n\ln^2 n \rceil,
\]
where $c \in \natsp$ is a constant that will be chosen sufficiently large.
In the remainder of this section, we fix a $\tau(n,m)$-interval-connected graph $\calG=(V,\calE)$ with underlying graph $G=(V,E)$, where $n = |V|$ and $m = |E|$.
We then show, for a sufficiently large constant $c$, that $\greedy_1$ explores $\calG$ within $\tau(n,m)$ time in the $KT_1$ model in Section~\ref{sec:1-hop}, whereas $\greedy_0$ explores $\calG$ within $O((m-n+1)n)$ time in the $KT_0$ model in Section~\ref{sec:0-hop}.
These results immediately prove Theorems~\ref{thm:kt_one_upper} and~\ref{thm:kt_zero_upper}, respectively.
In what follows, we denote by $\vcur \in V$ the current location of the agent, particularly in the pseudocode (Algorithms~\ref{al:greedy_one} and~\ref{al:greedy_zero}). In other words, $\vcur = \nu(t)$ at time $t$.

\subsection{With 1-hop view}
\label{sec:1-hop}


The strategy of $\greedy_1$, whose pseudocode is given in Algorithm~\ref{al:greedy_one}, is very simple.
During exploration, the agent constructs a map of the subgraph it has observed so far, removing all edges that it has observed to be unavailable at least once, and always moves toward the closest unvisited node in the map. 
We describe this more precisely below.

The agent maintains two sets of nodes $\vmap, \vvis$
and two functions $\lmap:\vmap \times \vmap \to \nats$ and $\pdel:\vmap \to 2^{\nats}$.
The set $\vmap$ (resp.~$\vvis$) is the set of nodes that the agent has observed (resp.~visited) so far.
Formally, at time $t$,
\[
\vmap = \{\nu(0)\} \cup \vobs(0) \cup \vobs(1) \cup \cdots \cup \vobs(t), \qquad
\vvis = \{\nu(i) \mid i \in [0..t]\}.
\]
Note that $\vvis \subseteq \vmap$ always holds.
For each pair $u,v \in \vmap$, the agent memorizes $\lmap(u,v) = \lambda_u(v)$ once it learns $\lambda_u(v)$,
which occurs at time $t$ when either (i) $\nu(t)=u$ and $v \in \vobs(t)$, or (ii) $\nu(t-1)=v$ and $\nu(t)=u$ (possibly $\{u, v\} \not\in \calE(t)$ in the second case). Until then, i.e., before learning $\lambda_u(v)$, $\lmap(u,v)$ returns $\bot$.
For each node $v \in \vmap$, $\pdel(v)$ is the set of all ports that have been unavailable at $v$ at least once so far.
Initially, $\pdel(v)=\emptyset$, and at each time $t$, all ports observed as unavailable at time $t$
(i.e., $[1..\delta(\nu(t))] \setminus P(\nu(t),t)$)
are added to $\pdel(\nu(t))$.
We then define the map 
$\gmap = (\vmap, \emapone \cup \emaptwo)$, where
\[
\emapone = \{\{u,v\} \mid
u,v \in \vvis \land \lmap(u,v) \notin \pdel(u) \cup \{\bot\} \land \lmap(v,u) \notin \pdel(v) \cup \{\bot\}
\},
\]
and
\[
\emaptwo = \{\{u,v\} \mid
u \in \vvis \land v \in \vmap \setminus \vvis \land \lmap(u,v) \notin \pdel(u) \cup \{\bot\}
\}.
\]
Intuitively, $\emapone \cup \emaptwo$ represents the set of edges that have always been present whenever the agent is located at one of their endpoints.

At each time, the agent chooses any node $w \in \vmap \setminus \vvis$ closest to $\vcur$ in $\gmap$ and moves to the next node on a shortest path from $\vcur$ to $w$ in $\gmap$ (line~2).
Note that the newly selected shortest path after the move may not be a suffix of the previous one, because the edge incident to the current node on that path may disappear at that time.
The agent terminates when $\vmap = \vvis$, i.e., when no unvisited nodes remain in the map (line~1).

\begin{algorithm}[t]
\caption{$\greedy_1$. The update rules of $\vvis$, $\vmap$, $\lmap$, and $\pdel$ are omitted from the pseudocode; see the main text for these rules.
}
\label{al:greedy_one}
\Initially{
$\vvis=\{\nu(0)\}$, $\vmap = \{\nu(0)\}\cup \vobs(0)$,
$\pdel(\nu(0))=[1..\delta(\nu(0))] \setminus P(\nu(0),0)$,\;
\hspace{1.8cm}
$
\lmap(u,v) = 
\begin{cases}
\lambda_{u}(v) & \tif u= \nu(0) \land v \in \vobs(0)\\
\bot &\totherwise
\end{cases}
$\;
}
\While{
$\vmap \setminus\vvis \neq \emptyset$
}{
Move to the next node on a shortest path from $\vcur$ to $w$ in $\gmap$,
where $w$ is an arbitrary closest node in $\vmap \setminus \vvis$ from $\vcur$ in $\gmap$\;
}
\end{algorithm}


In the remainder of Section~\ref{sec:1-hop}, we prove that under $\greedy_1$, a single agent visits all nodes of $\calG$ and terminates within $\tau(n,m)=O(\epsilon m + n\log^2n)$ time.

\begin{remark}
\label{remark:st}
Since the window size of $\calG$ equals the target upper bound $\tau(n,m)$ on the exploration time, we may assume w.l.o.g.\ that all snapshots share a common spanning tree $\tinf$, i.e., every edge in $\tinf$ is always present in $\calG$. Indeed, this assumption does not affect the behavior of $\greedy_1$ within $\tau(n,m)$ steps.
\end{remark}

\begin{lemma}
\label{lem:correct}
$\vmap = \vvis$ implies $\vvis = V$.
\end{lemma}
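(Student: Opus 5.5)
We argue by contradiction, using the common spanning tree $\tinf$ of Remark~\ref{remark:st}. Suppose at some time $t$ we have $\vmap = \vvis$ but $\vvis \neq V$. Since $\vvis$ is nonempty (it contains $\nu(0)$) and $\tinf$ is a spanning tree of $G$, some edge $\{u,v\}$ of $\tinf$ crosses the cut, with $u \in \vvis$ and $v \in V \setminus \vvis$. By Remark~\ref{remark:st} this edge is present in $\calE(i)$ for every $i$, in particular at every time the agent was at $u$. Hence the goal is to show that $v$ was observed, i.e.\ $v \in \vmap$. That would contradict $\vmap = \vvis$, because $v \notin \vvis$.

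The key step uses the fact that $u$ was visited. Since $u \in \vvis$, there is a time $i \le t$ with $\nu(i) = u$. At that time the $KT_1$ input includes $\vobs(i) = \{w \mid \{u,w\} \in \calE(i)\}$. Because $\{u,v\} \in \tinf \subseteq \calE(i)$, we get $v \in \vobs(i)$. By the definition $\vmap = \{\nu(0)\} \cup \vobs(0) \cup \cdots \cup \vobs(t)$, it follows that $v \in \vmap$. This gives $v \in \vmap \setminus \vvis$, the desired contradiction.

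In fact the argument barely needs Remark~\ref{remark:st}. The statement only requires that every visited node eventually reveals all its neighbors along some always-present connecting edge, and edges of $\tinf$ are always present. If we wanted to avoid the remark, we could instead note that the agent terminates within $\tau(n,m)$ steps. The intersection of the snapshots over the window $[0..\tau(n,m)-1]$ is connected, and its edges are present throughout, so the same cut argument applies to a spanning tree of that intersection. The only delicate point, and the place I would double-check, is ensuring the relevant window covers the time $i$ at which $u$ was visited. Using the window starting at time $0$ together with the termination bound (or simply invoking Remark~\ref{remark:st}) settles this.
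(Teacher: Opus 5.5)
Your proof is correct and is essentially the paper's argument. You take an edge $\{u,v\}$ of $\tinf$ with $u \in \vvis$ and $v \notin \vvis$, and note that it is always present, so $v \in \vobs(i) \subseteq \vmap$ for a time $i$ at which $\nu(i)=u$, contradicting $\vmap = \vvis$; your closing aside about avoiding Remark~\ref{remark:st} just restates that remark's own justification (behavior in the first $\tau(n,m)$ steps is unchanged), so it is harmless but not a different route.
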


\begin{proof}
Assume for contradiction that at some time $t$, we have $\vmap = \vvis \subsetneq V$.
Let $E_T$ be the edge set of the spanning tree $\tinf$ mentioned in Remark~\ref{remark:st}.
Then there exists an edge $\{u,v\} \in E_T$ with $u \in \vvis$ and $v \notin \vvis$.
Since $u \in \vvis$ at time $t$, the agent has visited $u$ by time $t$, and hence $v \in \vmap$, because $\{u,v\} \in E_T$ and is therefore always present.
This contradicts $\vmap = \vvis$.
\end{proof}

Lemma~\ref{lem:correct} guarantees the correctness of $\greedy_1$ provided that it terminates within $\tau(n,m)$ time.
To bound the exploration time by $\tau(n,m)$, we introduce the potential
\[
\dcur =
\begin{cases}
\min\{d_{\gmap}(\vcur,u) \mid u \in \vmap \setminus \vvis\} & \tif{\vvis \subsetneq \vmap},\\
0 & \tif{\vvis = \vmap}.
\end{cases}
\]
Since $\vcur \in \vvis$ at all times, $\dcur = 0$ if and only if $\vmap = \vvis$.
Thus, it suffices to show that $\dcur$ reaches zero within $\tau(n,m)$ time.

At any time, $\dcur$ is determined by $\vcur$, $\vvis$, $\vmap$, $\lmap$, and $\pdel$. 
At time $0$, we have $\dcur = 1$ because $\vvis=\{\nu(0)\}$ and $\gmap$ is the star graph centered at $\nu(0)$ with leaf set $\vobs(0)$.
For convenience of the analysis, when the agent moves from $\nu(t)$ to $\nu(t+1)$ at time $t$, we consider the following two steps to occur sequentially in this order:
\begin{enumerate}
\item In the first step, the current node $\vcur$ changes from $\nu(t)$ to $\nu(t+1)$.
If $\nu(t+1) \notin \vvis$, then $\nu(t+1)$ is added to $\vvis$.
Simultaneously, $\vmap$ and $\lmap$ are updated according to $\vobs(t+1)$.
\item In the second step, $\pdel$ is updated according to $\vobs(t+1)$.
\end{enumerate}
In the first step, if $\nu(t+1)$ has already been visited, then $\dcur$ decreases by exactly one;
otherwise, $\dcur$ may remain unchanged or increase.
We define $\iota_V: V \to \nats \cup \{-1\}$ as follows: for any $v \in V \setminus \{\nu(0)\}$, $\iota_V(v)$ is the amount by which $\dcur$ increases during the first step when the agent visits $v$ for the first time.
If $v$ is the last node visited by the agent, then $\dcur$ decreases from $1$ to $0$ at that time, and we define $\iota_V(v) = -1$ for this node. For simplicity, we set $\iota_V(\nu(0))=0$.
In the second step, some edges incident to $\nu(t+1)$ may be eliminated from $\gmap$.
We fix an arbitrary order of these edges and assume that they are eliminated sequentially.
Each elimination may increase $\dcur$, but each edge $e \in E$ can be eliminated at most once throughout the execution of $\greedy_1$. To capture this increase, we define $\iota_E: E \to \nats$ as follows: for each edge $e \in E$, $\iota_E(e)$ is the amount by which $\dcur$ increases when $e$ is eliminated, and $\iota_E(e)=0$ if $e$ is never eliminated from $\gmap$.

\begin{lemma}
\label{lem:potential}
$\vmap = \vvis$ holds within at most $n+\sum_{v \in V}\iota_V(v) + \sum_{e \in E}\iota_E(e)$ time.
\end{lemma}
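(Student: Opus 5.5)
We argue by a potential (amortization) argument on $\dcur$, splitting the time steps into two kinds. Let $t^*$ be the first time at which $\vmap=\vvis$, and consider any step $t<t^*$, so that $\dcur\ge 1$ before the move. By line~2, the agent moves along the first edge $\{\nu(t),\nu(t+1)\}$ of a shortest path in $\gmap$ from $\vcur$ to a closest unvisited node $w\in\vmap\setminus\vvis$. We want this edge to be available at time $t$. It is: because the agent is at $\nu(t)$ at time $t$, every port unavailable at time $t$ has already been added to $\pdel(\nu(t))$. Hence every $\gmap$-edge incident to $\vcur$ corresponds to a port in $P(\nu(t),t)$, and the move is legal.

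We then classify each step $t<t^*$ by its first sub-step. Call the step a \emph{revisit step} if $\nu(t+1)\in\vvis$ already, and a \emph{discovery step} otherwise.

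In a revisit step, $\nu(t+1)$ lies on a shortest path to $w$ and is not $w$ itself, since $w$ is unvisited. Adding no new visited node and enlarging $\vmap$ and $\lmap$ can only add edges to $\gmap$. Two points need checking here. First, newly added map nodes are unvisited, so they can only shorten distances to $\vmap\setminus\vvis$. Second, since $\vvis$ is unchanged, no edge moves from $\emaptwo$ status to requiring $\emapone$ conditions, and the only new edges are additions. Therefore $\dcur$ drops by exactly one, which is the claim stated in the text. In a discovery step, $\dcur$ changes by $\iota_V(\nu(t+1))$ by definition.

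In the second sub-step of either kind, edges are removed one at a time, and each removal raises $\dcur$ by $\iota_E(e)$. Each edge is removed at most once, and a removed edge never reappears, because $\pdel$ only grows.

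Now we telescope. The potential starts at $\dcur=1$ at time $0$ and ends at $0$ at time $t^*$. Summing all changes gives
\[
0 = 1 - R + \sum_{v\ \text{discovered before } t^*}\iota_V(v) + \sum_{e}\iota_E(e),
\]
where $R$ is the number of revisit steps. The $-1$ convention for the last discovered node makes this sum account exactly for the final drop to zero.

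The number of discovery steps is at most $n-1$, since each one visits a new node. Hence
\[
t^* \le (n-1) + R = n + \sum_{v\in V}\iota_V(v) + \sum_{e\in E}\iota_E(e),
\]
using $\iota_V(\nu(0))=0$. One small subtlety remains: terms with $\iota_E(e)$ for edges removed after $t^*$ do not occur, and all these terms are nonnegative anyway, so the bound is unaffected.

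The main obstacle is justifying rigorously that a revisit step decreases $\dcur$ by exactly one, and in particular that the first sub-step's updates to $\vmap$ and $\lmap$ never lengthen distances. The approach is to verify that $\emapone\cup\emaptwo$ is monotone under the first-sub-step updates when $\vvis$ is unchanged. Once that holds, $d_{\gmap}(\nu(t+1),w)=d_{\gmap}(\nu(t),w)-1$ in the old map, and the new map's distances can only be smaller. They also cannot be smaller than $\dcur-1$, because the old $\dcur$ was minimal over all targets, and any new unvisited node is adjacent only to $\nu(t+1)$. The case where a newly observed node appears at distance $1$ from $\nu(t+1)$ requires the old $\dcur$ to equal $2$, which gives consistency.
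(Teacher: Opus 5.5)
Your argument is the paper's own telescoping of $\dcur$: revisit steps contribute $-1$, first visits contribute $\iota_V$, eliminations contribute $\iota_E$, and there are at most $n-1$ first visits. The approach is right, but two points need repair.

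\textbf{Existence of $t^*$.} You define $t^*$ as the first time with $\vmap=\vvis$ and telescope up to it. This presupposes that such a time exists, which is part of what the lemma asserts. The paper avoids this by writing the identity at an arbitrary time $t$ and using $\dcur\ge 0$. You can do the same: apply your accounting at any $t<t^*$, where $\dcur\ge 1$. This bounds every such $t$ and hence forces $t^*$ to exist and satisfy the claimed bound.

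\textbf{The ``exactly one'' justification.} Your last paragraph contains a non sequitur. Nothing forces the old $\dcur$ to equal $2$ when a newly observed unvisited node becomes $\gmap$-adjacent to $\nu(t+1)$. If that could happen, $\dcur$ would drop to $1$ from an arbitrary value. You also do not address new values $\lmap(\nu(t+1),x)$ toward nodes $x$ already in $\vmap$, which could likewise create shortcut edges in $\emapone\cup\emaptwo$.

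The correct reason neither can happen is this. Any $x$ with $\lmap(\nu(t+1),x)=\bot$ just before a revisit was absent at an earlier visit of $\nu(t+1)$. So its port is already in $\pdel(\nu(t+1))$, and the first sub-step of a revisit adds no edge to $\gmap$ at all. The arrival edge is already in $\emapone$.

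Alternatively, exactness is not needed. You correctly showed that the new $\dcur$ is at most $d_{\gmap}(\nu(t+1),w)=\dcur-1$, which gives a decrease of at least one. Replacing your equality $0=1-R+\cdots$ by $0\le 1-R+\cdots$ yields the same bound on $R$, hence on $t^*$.
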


\begin{proof}
Let $V_t$ denote $\vvis$ at time $t$, and let $\bar{E_t}$ be the set of edges eliminated from $\gmap$ when 
the agent moves from $\nu(t)$ to $\nu(t+1)$ in the second step of time $t$.
In the first step of time $t$, $\dcur$ increases by $\iota_V(v)$ if 
$V_{t+1} = V_t \cup\{v\}$ for some $v \in V$; otherwise, $\dcur$ decreases by exactly one.
In the second step of time $t$, $\dcur$ increases by $\sum_{e \in \bar{E_t}} \iota_E(e)$.
Since $\dcur=1$ at time $0$, 
we have
\[
\dcur
= 1-(t-|V_t|+1)+\sum_{v \in V_t} \iota_V(v) + \sum_{j=0}^t\sum_{e \in \bar{E_j}} \iota_E(e)
\le -t+n+\sum_{v \in V}\iota_V(v) + \sum_{e \in E}\iota_E(e)
\]
at time $t$. 
Since $\dcur$ is non-negative, it reaches zero within $n+\sum_{v \in V}\iota_V(v) + \sum_{e \in E}\iota_E(e)$ time, and $\dcur = 0$ holds only when $\vmap = \vvis$.
\end{proof}

By Remark~\ref{remark:st} and Lemmas~\ref{lem:correct} and~\ref{lem:potential}, it suffices to show $\sum_{v \in V}\iota_V(v) + \sum_{e \in E}\iota_E(e) = O(\epsilon m + n \log^2 n)$.
We prove $\sum_{v \in V}\iota_V(v)\le 2 n\log n$ in Lemma~\ref{lem:nodes} and
$\sum_{e \in E}\iota_E(e)=O(\epsilon m + n \log^2 n)$ in Lemma~\ref{lem:edges}.
We use the following lemma to prove Lemma~\ref{lem:nodes}.




\begin{lemma}
\label{lem:tee_dist}
Let $G' = (V',E')$ be any simple, undirected, and connected graph with $n' \ge 1$ nodes,
and let $v_1, v_2, \dots, v_{n'}$ be the nodes of $G'$ in any order.
Then, $\sum_{i \in [1..n'-1]} f(i) \le 2n' \log n'$,
where $f(i) = \min \{ d_{G'}(v_i,v_j) \mid j \in [i+1..n'] \}$.
\end{lemma}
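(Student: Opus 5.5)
The plan is a threshold-counting (packing) argument. I will write $\sum_{i\in[1..n'-1]} f(i)=\sum_{r\ge 1}|S_r|$, where $S_r=\{i\in[1..n'-1]\mid f(i)\ge r\}$, and bound each $|S_r|$ using the fact that the nodes indexed by $S_r$ are far apart from each other. Two remarks set up the sum. First, $f(i)\le d_{G'}(v_i,v_{n'})\le n'-1$ for every $i<n'$, since $G'$ is connected with $n'$ nodes, so its diameter is at most $n'-1$. Hence only $r\in[1..n'-1]$ contributes. Second, if $n'=1$ the sum is empty and the claim is trivial, so assume $n'\ge 2$.

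The key observation is a separation property. Let $i<i'$ both lie in $S_r$. Since $i'$ is one of the later indices considered in the definition of $f(i)$, we get $d_{G'}(v_i,v_{i'})\ge f(i)\ge r$. So $\{v_i\mid i\in S_r\}$ is a set of nodes at pairwise distance at least $r$.

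Next I apply a ball-packing bound. Put $k=\lfloor (r-1)/2\rfloor$. The balls $B(v_i,k)$ for $i\in S_r$ are pairwise disjoint, since $2k<r$.

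If $|S_r|\ge 2$, each ball contains at least $k+1$ nodes. To see this, take a shortest path from $v_i$ to another member of $S_r$. It has length at least $r>k$, so its first $k+1$ nodes lie in $B(v_i,k)$. Disjointness then gives $|S_r|(k+1)\le n'$, and since $k+1\ge r/2$, we obtain $|S_r|\le 2n'/r$.

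If $|S_r|\le 1$, the same bound holds trivially, because $r\le n'-1<2n'$ gives $2n'/r\ge 1$.

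Summing over $r$ gives $\sum_i f(i)\le 2n'\sum_{r=1}^{n'-1}1/r=2n'H_{n'-1}$. It remains to show $H_{n'-1}\le\log n'$ (base 2). I expect this elementary inequality to be the only delicate point, since it is tight at $n'=2$. I would prove it by induction on $n'$. The cases $n'=2,3,4$ are checked directly ($1\le 1$, $1.5\le 1.58$, $1.83\le 2$). The inductive step needs $1/(n'-1)\le\log n'-\log(n'-1)$. This holds because $\log n'-\log(n'-1)=\log_2(1+1/(n'-1))\ge \frac{1}{n'}\log_2 e\cdot\frac{n'}{n'-1}\cdot\frac{1}{1+1/(n'-1)}$, which comes from the standard bound $\ln(1+x)\ge x/(1+x)$. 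Simplifying, the right-hand side equals $\log_2 e/n'$, so it suffices that $\log_2 e/n'\ge 1/(n'-1)$. That holds for $n'\ge 4$ because $\log_2 e\approx 1.44\ge n'/(n'-1)$. This completes the bound $\sum_{i\in[1..n'-1]}f(i)\le 2n'\log n'$.
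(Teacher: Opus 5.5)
Your proof is correct, but it takes a different route from the paper. The paper inducts on $n'$ using a centroid decomposition of a spanning tree $T_{G'}$. Each component of $T_{G'}-c_T$ has at most $n'/2$ nodes. The induction hypothesis, applied with tree distances (which dominate $d_{G'}$), covers every node of a component except its last-appearing node $a_r$. The exceptional nodes $a_1,\dots,a_\ell$ and $c_T$ are then charged along consecutive walks through the centroid, at total cost at most $2n'$. The halving of component sizes is what produces the factor $\log n'$.

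You instead use a non-inductive threshold/packing argument. You write $\sum_i f(i)=\sum_r |S_r|$ and observe that $\{v_i : i\in S_r\}$ is $r$-separated in $d_{G'}$. Packing disjoint balls of radius $\lfloor (r-1)/2\rfloor$ then gives $|S_r|\le 2n'/r$. This is essentially the classical analysis of the nearest-neighbour heuristic.

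\textbf{What your route buys.} It works directly in the metric of $G'$, with no spanning tree or centroid. It also gives the sharper intermediate bound $2n'H_{n'-1}\approx 2n'\ln n'$, asymptotically a factor $\ln 2$ below the stated bound. The price is the separate analytic inequality $H_{n'-1}\le\log n'$. You verify it correctly, including the need to check $n'=2,3$ by hand, since the inductive step needs $\log_2 e\ge n'/(n'-1)$, which holds only for $n'\ge 4$.

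\textbf{What the paper's route buys.} It lands exactly on $2n'\log n'$ with no analytic estimate, because the base-$2$ logarithm arises structurally from the halving.

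A minor simplification: your case split on $|S_r|$ is unnecessary. Every $i\in S_r$ satisfies $i\le n'-1$, so $d_{G'}(v_i,v_{n'})\ge f(i)\ge r$. A shortest path from $v_i$ to $v_{n'}$ already shows $|B(v_i,k)|\ge k+1$.
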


\begin{proof}
We prove the lemma by induction on $n'$.
The base case $n' = 1$ is trivial, as both sides are equal to $0$.
Consider the general case $n' \ge 2$.
Let $T_{G'}$ be any spanning tree of $G'$.
Let $c_T$ be a centroid of $T_{G'}$, and let $T_1, T_2, \dots, T_{\ell}$ be the connected components of $T_{G'} - c_T$.\footnote{
Every tree has one or two centroids.
}
For each $r \in [1..\ell]$, let $n_r = |V(T_r)|$.
Since $c_T$ is a centroid, we have $n_r \le n'/2$ for every $r$, and $\sum_{r=1}^{\ell} n_r = n' - 1$.
Fix $r$.
Let $x_1, \dots, x_{n_r}$ be the nodes of $T_r$ in the order they appear in the sequence $v_1, v_2, \dots, v_{n'}$, and let $a_r = x_{n_r}$.
For $s \in [1..n_r-1]$, define $f_r(s) = \min \{ d_{T_r}(x_s,x_t) \mid t \in [s+1..n_r] \}$.
Then, for each $v_i = x_s \neq a_r$, we have $f(i) \le f_r(s)$.
Therefore, by the induction hypothesis, the total contribution of all nodes in $T_r \setminus \{a_r\}$ is at most
$\sum_{s=1}^{n_r-1} f_r(s) \le 2 n_r \log n_r \le 2n_r \log (n'/2) = 2n_r (\log n' - 1)$.
We also need to bound the contribution of the exceptional nodes $a_1, a_2, \dots, a_{\ell}$ together with the centroid $c_T$.
Let $b_1, b_2, \dots, b_{\ell+1}$ be these nodes in the order they appear in the sequence $v_1, v_2, \dots, v_{n'}$.
Note that $b_{\ell+1} = v_{n'}$, and thus we ignore its contribution.
Since $b_{i+1}$ appears later than $b_i$ in the sequence,
the total contribution of the remaining nodes $b_1, b_2, \dots, b_{\ell}$ is bounded by
$\sum_{i=1}^{\ell} d_{T_{G'}}(b_i,b_{i+1}) \le 2(n_1 + n_2 + \dots + n_{\ell}) \le 2n'$.
Consequently,
\[
\sum_{i=1}^{n'-1} f(i)
\le \sum_{r=1}^{\ell} 2n_r (\log n' - 1) + 2n'
\le 2n' (\log n' - 1) + 2n'
\le 2n' \log n'. \qedhere
\]
\end{proof}

\begin{lemma}
\label{lem:nodes}
$\sum_{v \in V}\iota_V(v) \le 2n \log n$.
\end{lemma}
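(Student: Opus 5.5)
We plan to apply Lemma~\ref{lem:tee_dist} with $G' = \tinf$, the common spanning tree from Remark~\ref{remark:st}. We order the nodes as $v_1 = \nu(0), v_2, \dots, v_n$ by the time of their first visit. This gives $n' = n$ and $f(i) = \min\{d_{\tinf}(v_i,v_j) \mid j > i\}$. The goal is to show $\iota_V(v_i) \le f(i)$ for every $i \in [2..n-1]$. Together with $\iota_V(v_1)=0$ and $\iota_V(v_n) = -1$, the lemma then gives $\sum_{v}\iota_V(v) \le \sum_{i=1}^{n-1} f(i) \le 2n\log n$.

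\textbf{First step: $\dcur$ just before a first visit.} Consider the time $t$ at which the agent first moves into $v_i$, with $i \ge 2$. The agent moves to the next node on a shortest path in $\gmap$ toward a closest unvisited node $w$. All interior nodes of such a path are visited, because otherwise a closer unvisited node would exist. So if the next node is unvisited, it must be $w$ itself, and the value of $\dcur$ just before the first step is exactly $1$. Hence $\iota_V(v_i)$ equals (value of $\dcur$ right after the first step) $-\,1$. It therefore suffices to show that right after the first step, $\dcur \le f(i)$.

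\textbf{Second step: $\tinf$-edges are in the map.} We show that every $\tinf$-edge incident to a visited node lies in $\gmap$ at every moment. Take $\{u,x\} \in \tinf$ with $u \in \vvis$.
\begin{itemize}
\item Since the edge is always present, $x \in \vobs$ whenever the agent stood at $u$. Thus $x \in \vmap$ and $\lmap(u,x) = \lambda_u(x) \ne \bot$, and this already holds immediately after the first step at $u$, since $\vmap$ and $\lmap$ are updated there.
\item The port $\lambda_u(x)$ is never unavailable, so it never enters $\pdel(u)$.
\item If $x \in \vvis$ as well, the symmetric facts hold, so $\{u,x\} \in \emapone$.
\item If $x \notin \vvis$, then $\{u,x\} \in \emaptwo$.
\end{itemize}
The elimination step only removes edges with a port in $\pdel$, so $\tinf$-edges are never eliminated.

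\textbf{Third step: bounding $\dcur$ by $f(i)$.} Right after the first step at $v_i$, let $j > i$ attain $f(i)$. Then $v_j$ is still unvisited. Walk along the $\tinf$-path from $v_i$ to $v_j$ and stop at the first unvisited node $u$; such a node exists since $v_j$ itself is unvisited. Every edge of this prefix has a visited endpoint on its near side, so by the previous paragraph each edge lies in $\gmap$. This gives $\dcur \le d_{\gmap}(v_i,u) \le d_{\tinf}(v_i,v_j) = f(i)$.

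The main obstacle I expect is bookkeeping in the two-step update: one must check that edges observed at $v_i$ at time $t+1$ are already in the map when $\iota_V$ is measured, i.e.\ after the first step. This holds because $\vmap$ and $\lmap$ are updated in the first step. It also matters that the pre-move value of $\dcur$ is taken after the previous elimination step, which is exactly the map the agent used to choose its move.
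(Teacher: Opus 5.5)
Your proposal is correct and follows the paper's proof. Like the paper, you apply Lemma~\ref{lem:tee_dist} to $\tinf$ with nodes ordered by first visit, and you bound $\iota_V(v_i)$ by following the $\tinf$-path toward the nearest later-visited node up to its first unvisited node. Your extra bookkeeping (the pre-move value $\dcur = 1$, and the fact that $\tinf$-edges at visited nodes never leave $\gmap$) makes explicit what the paper asserts in one line, and it even gives the slightly sharper bound $\iota_V(v_i) \le f(i) - 1$.
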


\begin{proof}
Let $v_1, v_2, \dots, v_n$ be the distinct nodes visited by the agent in this order,
and let $\tinf$ be the spanning tree mentioned in Remark~\ref{remark:st}.
By Lemma~\ref{lem:tee_dist},
$\sum_{i=1}^{n-1} f(i) \le 2n \log n$,
where $f(i) = \min \{ d_{\tinf}(v_i,v_j) \mid j \in [i+1..n] \}$.
Since $\iota_V(v_n)=-1$, it suffices to show that $\iota_V(v_i) \le f(i)$.
By definition, in $\tinf$, there exists a path $u_0, u_1, \dots, u_{f(i)}$ such that $u_0 = v_i$ and $u_{f(i)} = v_j$ for some $j \in [i+1..n]$.
Since every edge in $\tinf$ is always present, there exists $k \in [1..f(i)]$ such that $\gmap$ contains the path $u_0, u_1, \dots, u_k$ and $u_k$ is unvisited at the time $v_i$ is first visited.
This implies $\iota_V(v_i) \le k \le f(i)$.
\end{proof}

To bound $\sum_{e \in E}\iota_E(e)$ in Lemma~\ref{lem:edges},
for each $k \in \nats$, we define
\[
F'(k) = \left|\{e \in E \mid \iota_E(e) \ge k\}\right|,
\]
that is, the number of edges whose deletion increases $\dcur$ by at least $k$.
We bound $F'(k)$ using the following well-known theorem from extremal graph theory
relating the girth and the number of edges.
Here, the \emph{girth} of a graph $G'$
is defined as the length of a shortest cycle in $G'$.
If $G'$ does not contain any cycle, its girth is defined as $\infty$.

\begin{theorem}[Theorem~4.1 in~\cite{FS13}\footnotemark{}]
\label{thm:girth}
Let $g \in \natsp$. For any graph $G'=(V',E')$ with girth at least $2g+1$,
\[
|E'| \le \frac{1}{2}\,|V'|^{1+1/g} + \frac{1}{2}\,|V'|.
\]
\end{theorem}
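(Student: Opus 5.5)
The bound is the irregular Moore bound of Alon, Hoory and Linial, and I would prove it in that form. Write $n'=|V'|$ and let $d=2|E'|/n'$ be the average degree. The claimed inequality is equivalent to $d \le n'^{1/g}+1$, i.e.\ to $n' \ge (d-1)^g$ whenever $d>1$ (if $d\le 1$ there is nothing to prove). So the goal is to show that a graph of girth at least $2g+1$ and average degree $d$ has at least $(d-1)^g$ vertices. For $g=1$ the statement only says that a simple graph has at most $\binom{n'}{2}\le \tfrac12 n'^2+\tfrac12 n'$ edges, so the real content is $g\ge 2$.

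First I reduce to minimum degree at least $2$, by induction on $n'$. Suppose some vertex $x$ has $\deg x\le 1$, and delete it.
\begin{itemize}
\item The number of edges drops by at most $1$.
\item The girth does not decrease.
\item The right-hand side drops by $\tfrac12\bigl(n'^{1+1/g}-(n'-1)^{1+1/g}\bigr)+\tfrac12 \ge \tfrac12+\tfrac12=1$, using $n'^{1+1/g}-(n'-1)^{1+1/g}\ge 1$.
\end{itemize}
Hence the inequality for $G'-x$ implies it for $G'$. The same argument lets us discard isolated vertices, so from now on $\delta(G')\ge 2$.

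Next comes the counting step. Because the girth exceeds $2g$, for any vertex $v$ the ball of radius $g$ around $v$ induces a tree. Consequently, distinct non-backtracking walks of length $g$ starting at $v$ end at distinct vertices, since two such walks meeting would close a cycle of length at most $2g$. So if $W_g(v)$ denotes the number of these walks, then $n'\ge W_g(v)$ for every $v$. It therefore suffices to find a vertex $v$ with $W_g(v)\ge (d-1)^g$.

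The key ingredient, and the step I expect to be the main obstacle, is the Alon--Hoory--Linial inequality for graphs with minimum degree at least $2$:
\[
\sum_{v} W_g(v) \;\ge\; n'\,d\,(d-1)^{g-1}.
\]
I would prove it via the non-backtracking random walk on directed edges. Its uniform distribution on the $n'd$ directed edges is stationary, and $\delta\ge 2$ guarantees that every step has a successor. The count satisfies $\sum_v W_g(v)=n'd\cdot \mathbb{E}\bigl[\prod_{i=1}^{g-1}(\deg(x_i)-1)\bigr]$, where $x_i$ is the head of the $i$-th edge of a stationary walk. By Jensen (AM--GM) and stationarity,
\[
\mathbb{E}\Bigl[\prod_{i=1}^{g-1}(\deg(x_i)-1)\Bigr] \;\ge\; \exp\Bigl(\sum_{i=1}^{g-1}\mathbb{E}\log(\deg(x_i)-1)\Bigr).
\]
Each term in the exponent equals the edge-weighted average $\frac{1}{n'd}\sum_v \deg(v)\log(\deg(v)-1)$. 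A second convexity argument, using that $t\mapsto t\log(t-1)$ is convex on $t\ge 2$, shows this average is at least $\log(d-1)$.

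Averaging over $v$ then gives a vertex with $W_g(v)\ge d(d-1)^{g-1}\ge (d-1)^g$. Combining with the counting step, $n'\ge (d-1)^g$, so $d\le n'^{1/g}+1$ and $|E'|=n'd/2\le \tfrac12 n'^{1+1/g}+\tfrac12 n'$.
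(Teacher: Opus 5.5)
Your argument is correct, and it is essentially the Alon--Hoory--Linial irregular Moore bound proof that the paper defers to; the paper itself gives no proof, citing F\"uredi--Simonovits and, in the footnote, AHL. The reduction to minimum degree at least $2$, the non-backtracking-walk count with the uniform stationary distribution on directed edges, and the two convexity steps (using $\exp$ and $t\log(t-1)$ on $t\ge 2$) all check out.

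One remark is inaccurate: the radius-$g$ ball need not induce a tree when the girth is exactly $2g+1$ (e.g.\ $G'=C_{2g+1}$). The fact you actually use still holds as you state it: distinct non-backtracking walks of length $g$ from $v$ are paths with distinct endpoints, since otherwise they would close a cycle of length at most $2g$.
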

\footnotetext{
According to~\cite{FS13}, an essentially equivalent inequality was proved
by Alon, Hoory, and Linial~\cite{AHL02}.
}

\begin{lemma}
\label{lem:diff_edges}
For any $k \in \natsp$, $F'(2k-1) \le \frac{1}{2}n^{1+1/k}-\frac{n}{2}+1$.
\end{lemma}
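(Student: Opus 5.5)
The plan is to apply Theorem~\ref{thm:girth} with $g=k$ to the graph
$H=(V,\, E(\tinf)\cup E_k)$, where $\tinf$ is the spanning tree of Remark~\ref{remark:st} and $E_k=\{e\in E\mid \iota_E(e)\ge 2k-1\}$.

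Tree edges are always present, so they are never eliminated and have $\iota_E=0$. Hence $E(\tinf)\cap E_k=\emptyset$ (as $2k-1\ge 1$), and $|E(H)| = n-1+F'(2k-1)$. If $H$ has girth at least $2k+1$, Theorem~\ref{thm:girth} gives $n-1+F'(2k-1)\le \frac12 n^{1+1/k}+\frac n2$, which is exactly the claimed bound. So everything reduces to showing that $H$ has no cycle of length $L\le 2k$.

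Suppose $C$ is such a cycle. It contains at least one edge of $E_k$, since $\tinf$ is acyclic. Let $e=\{u,v\}$ be the edge of $C\cap E_k$ that is eliminated first, in the sequential elimination order fixed in the paper. At that moment the agent is at an endpoint, say $\vcur=u$.

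Because $\iota_E(e)\ge 1$, the edge $e$ lies on every shortest path from $u$ to $\vmap\setminus\vvis$ in $\gmap$ just before the elimination. Otherwise its removal would not increase $\dcur$. Writing $d$ for distances to $\vmap\setminus\vvis$ in $\gmap$ before the elimination, this gives $\dcur = d(u) = d(v)+1$.

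Let $P = C - e$, which is a $u$--$v$ path with $L-1$ edges. I claim that, after $e$ is removed, $\gmap$ still contains the prefix of $P$ up to the first node of $P$ that is not in $\vvis$, or all of $P$ if every node is visited. The edges of this prefix come in three kinds.
\begin{itemize}
\item \emph{Tree edges.} Their endpoints in $\vvis$ have been observed by the agent (this uses $KT_1$), and the edges are never deleted, so they are in $\emapone\cup\emaptwo$.
\item \emph{Edges of $E_k\setminus\{e\}$ with both endpoints visited.} Such an edge was seen by the agent at each endpoint. Either it was present at both visits, so both port labels are known and not in $\pdel$ and it lies in $\emapone$, or it was absent at some visit and was therefore eliminated earlier. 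The latter contradicts the choice of $e$ as the first-eliminated edge of $C\cap E_k$. The same reasoning covers an edge from a visited node to the first unvisited node, which then lies in $\emaptwo$.
\end{itemize}
This step, making the ``first eliminated'' choice interact correctly with the definitions of $\emapone$ and $\emaptwo$, is the point I expect to need the most care. In particular one must check that an edge which will be eliminated later but is not yet in the map cannot occur on the prefix.

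Given the claim, bound $\dcur$ after the elimination in two cases.
\begin{itemize}
\item If the prefix stops at an unvisited node after $j\le L-1$ edges, then $\dcur$ afterwards is at most $L-1$. Before the elimination $\dcur\ge 1$, so the increase is at most $L-2$.
\item If all of $P$ survives, then $\dcur$ afterwards is at most $(L-1)+d'(v)$, where $d'$ is the distance after removing $e$. Also $d'(v)=d(v)$, because $e$ is not on any shortest path from $v$, as $d(u)>d(v)$. So $\dcur$ afterwards is at most $(L-1)+d(u)-1$, and again the increase is at most $L-2$.
\end{itemize}
In both cases $\iota_E(e)\le L-2\le 2k-2<2k-1$, contradicting $e\in E_k$. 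Hence $H$ has girth at least $2k+1$, and the lemma follows.
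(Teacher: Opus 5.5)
Your proof is correct and follows essentially the same route as the paper. Both apply the girth bound to $E(\tinf)\cup E_k$, take the first-eliminated edge $e$ of a short cycle $C$, use that the edges of $C-e$ up to the first unvisited node survive in $\gmap$, and conclude $\iota_E(e)\le |C|-2$. Your split into ``all of $P$ visited'' versus ``not'' just makes explicit the paper's step $d_{G'-e}(u,v)\ge 2k$, including the fact $d'(v)=d(v)$ that the paper leaves implicit, followed by its argument on the missing edges $Y$.

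The one thing to tighten is your dichotomy for edges of $E_k\setminus\{e\}$, since an edge found absent at some visit need not have been ``eliminated earlier.'' Either it was not in $\gmap$ when that absence was recorded, so it can never re-enter $\gmap$ and $\iota_E=0$, contradicting membership in $E_k$. Or it is removed in the same batch as $e$ but after it, so it is still in $\gmap$ at the instant $\iota_E(e)$ is measured. Your prefix claim holds in both cases.
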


\begin{proof}
Consider the subgraph $G_{X} = (V, X \cup E_T)$, where
$X = \{e \in E \mid \iota_E(e) \ge 2k-1\}$,
and $E_T$ is the edge set of the spanning tree $\tinf$ mentioned in Remark~\ref{remark:st}.
By Theorem~\ref{thm:girth}, it suffices to show that $G_{X}$ has girth at least $2k+1$,
since $F'(2k-1) = |X \cup E_T| - |E_T| \le \frac{1}{2}n^{1+1/k}-\frac{n}{2}+1$.
Note that every edge in $E_T$ is always present, and thus $X \cap E_T = \emptyset$.

Assume for contradiction that $G_{X}=(V,X\cup E_T)$ contains a cycle $C$ of length at most $2k$.
Since $\tinf$ is a tree, $C$ contains at least one edge in $X$.
Let $e=\{u,v\}$ be the first edge removed from $\gmap$ among those in $C$,
and let $G'=(V,E')$ be $\gmap$ immediately before the removal of $e$.
Then $d_{G'-e}(u,v) - d_{G'}(u,v) \ge 2k-1$,
since otherwise the removal of $e$ would not increase $\dcur$ by at least $2k-1$.
Thus, $d_{G'-e}(u,v) \ge 2k-1 + d_{G'}(u,v) = 2k$,
implying that $G'$ does not contain the cycle $C$, i.e., $C \setminus E' \neq \emptyset$.

Let $Y = C \setminus E'$, which is non-empty.
Every edge in $E_T$ is always present, and $e$ is the first edge in $C$ removed from $\gmap$.
Therefore, at the time immediately after the removal of $e$,
(i) the agent is at either $u$ or $v$, i.e., $\vcur \in \{u,v\}$, 
(ii) both endpoints of every edge in $Y$ are unvisited, and
(iii) all edges in $C \setminus (Y \cup \{e\})$ are present.
Hence, at this moment, there exists a path from $\vcur$ to some unvisited node consisting only of edges in $C \setminus (Y \cup \{e\})$,
and thus $\dcur \le |C|-1 \le 2k-1$.
Since $\dcur \ge 1$ before the removal of $e$, this contradicts $\iota_E(e) \ge 2k-1$.
\end{proof}

\begin{corollary}
\label{cor:diff_edges}
For any integer $k \ge 2\ln n$, $F'(2k-1) \le \frac{n\ln n}{k}+1$.
\end{corollary}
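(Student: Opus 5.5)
This follows directly from Lemma~\ref{lem:diff_edges}: it suffices to show that, for $k \ge 2\ln n$,
\[
\frac{1}{2}n^{1+1/k} - \frac{n}{2} \le \frac{n\ln n}{k}.
\]
Factoring out $n/2$, this is equivalent to $n^{1/k} - 1 \le \frac{2\ln n}{k}$.

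Set $x = \frac{\ln n}{k}$, so that $n^{1/k} = e^{x}$. The hypothesis $k \ge 2\ln n$ gives $0 \le x \le 1/2$. The target inequality becomes $e^x - 1 \le 2x$ on $[0, 1/2]$, and this is where the hypothesis is actually used. I would prove it by convexity.

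The function $h(x) = e^x - 1 - 2x$ is convex, so on $[0,1/2]$ it is at most the maximum of its endpoint values. We have $h(0) = 0$ and $h(1/2) = \sqrt{e} - 2 < 0$. Hence $h \le 0$ on the whole interval.

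An alternative is the series bound $e^x - 1 \le x + x^2 \le 2x$, valid for $0 \le x \le 1$. Either route closes the inequality. Combining it with Lemma~\ref{lem:diff_edges} yields
\[
F'(2k-1) \le \frac{n}{2}\bigl(n^{1/k} - 1\bigr) + 1 \le \frac{n\ln n}{k} + 1.
\]
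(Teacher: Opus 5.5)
Your proof is correct and takes the same route as the paper: substitute into Lemma~\ref{lem:diff_edges} and show $n^{1/k}=e^{x}\le 1+2x$ for $x=(\ln n)/k\le 1/2$. The only difference is the elementary inequality. The paper chains $e^{x}\le (1-x)^{-1}\le 1+2x$, while your convexity argument (or the series bound) gives $e^{x}-1\le 2x$ directly, and either works.
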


\begin{proof}
For any real number $0 \le x < 1$, we have $e^x \le (1-x)^{-1}$.
Thus, letting $y = (\ln n)/k \le 1/2$,
$n^{1+1/k} = ne^{y} \le \frac{n}{1-y}
= n + \frac{ny}{1-y} \le n + 2ny = n+ \frac{2n\ln n}{k}$.
The corollary then follows from Lemma~\ref{lem:diff_edges}.
\end{proof}

\begin{lemma}
\label{lem:edges}
$\sum_{e \in E}\iota_E(e) = O(\epsilon m+n \log^2 n)$.
\end{lemma}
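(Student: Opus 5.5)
We write $\sum_{e\in E}\iota_E(e)=\sum_{j\ge 1}F'(j)$ (layer-cake decomposition, since $\iota_E(e)\in\nats$). Because $F'$ is non-increasing, grouping consecutive indices in pairs gives
\[
\sum_{j\ge1}F'(j)\le 2\sum_{k\ge1}F'(2k-1).
\]
We then split the sum over $k$ at two thresholds: small $k$, where we use the trivial bound $F'(2k-1)\le m$; intermediate $k$, where we use Lemma~\ref{lem:diff_edges}; and large $k\ge 2\ln n$, where we use Corollary~\ref{cor:diff_edges}.

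For the large range, we also need an upper cutoff. Note that $\iota_E(e)$ is at most the value of $\dcur$ after the elimination, which is at most $n$, since a finite distance in $\gmap$ is below $n$ and $\dcur$ is always finite by the always-present tree $\tinf$. Hence $F'(2k-1)=0$ for $2k-1>n$. Summing Corollary~\ref{cor:diff_edges} over $k\in[2\ln n..n]$ gives
\[
\sum_{k}\Bigl(\frac{n\ln n}{k}+1\Bigr)=O(n\ln n\cdot\ln n)+O(n)=O(n\log^2 n).
\]

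For the lower range, let $k_0=\lceil\epsilon\rceil$. For $k\le k_0$, use $F'(2k-1)\le m$, contributing at most $m(\epsilon+1)=O(\epsilon m)$; here $m\ge n$ absorbs the $+1$, and $\epsilon m$ versus $m$ is harmless since the total will be compared to $\epsilon m+n\log^2 n$ and $m\le \epsilon m+O(m)$. Actually we need $m=O(\epsilon m+n\log^2 n)$. This holds because $\epsilon\ge$ a constant whenever $m\le n^{c'}$: $\epsilon=\ln n/(1+\ln(m/n))\ge \ln n/(1+\ln n)\ge 1/2$ for $n\ge3$ since $m\le n^2$. So $m\le 2\epsilon m$.

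For $k\in(k_0,2\ln n)$, use Lemma~\ref{lem:diff_edges}: $F'(2k-1)\le n^{1+1/k}$. Writing $n^{1/k}=e^{\ln n/k}$ and recalling that $n^{1+1/\epsilon}=em$, we get for $k=\epsilon+s$ that
\[
n^{1+1/k}\le n\cdot e^{\ln n/\epsilon}\cdot e^{-(\ln n)s/(\epsilon(\epsilon+s))}.
\]
The main obstacle is making this sum come out as $O(\epsilon m+n\log^2 n)$ rather than $\epsilon m\log n$. I would bound it by splitting further.

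\emph{Range $s\le\epsilon$.} Here $k\le 2\epsilon$, and the exponent $\ln n/k-\ln n/\epsilon=-\ln n\cdot s/(k\epsilon)\le -s\ln n/(2\epsilon^2)$. The terms decay geometrically with ratio $e^{-\ln n/(2\epsilon^2)}$. When $\epsilon^2\lesssim\ln n$ this gives $O(m)$; otherwise the sum is at most $\epsilon\cdot em=O(\epsilon m)$ trivially, since there are at most $\epsilon$ terms. Either way the range contributes $O(\epsilon m)$.

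\emph{Range $k\ge 2\epsilon$.} Here $n^{1/k}\le n^{1/(2\epsilon)}=\sqrt{em/n}$, so each term is at most $\sqrt{emn}\le \tfrac12(m+en)$, i.e., $O(m)$. This is too weak directly over $\ln n$ terms. Instead, use the dyadic doubling trick: for $k\in[2^i\epsilon,2^{i+1}\epsilon)$, $n^{1+1/k}\le n(em/n)^{2^{-i}}$. Summing over the $\le 2^i\epsilon$ values of $k$ in each block gives
\[
\sum_i 2^i\epsilon\, n\,(em/n)^{2^{-i}}.
\]
The first block gives $O(\epsilon m)$. Later blocks shrink super-geometrically as long as $(em/n)^{2^{-i}}\ge 4$, after which the terms are $O(n)$ each, giving $O(n\ln n)$ in total over $k<2\ln n$. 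Summing yields $O(\epsilon m+n\log n)$ for this range.

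Adding the three ranges and doubling yields $\sum_e\iota_E(e)=O(\epsilon m+n\log^2 n)$.
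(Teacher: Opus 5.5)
Your proof is correct and follows the paper's argument. It uses the same reduction to $2\sum_k F'(2k-1)$ (your layer-cake pairing is equivalent to the paper's Abel-summation step), the same cutoff $F'(n)=0$, and the same split at $\epsilon$, $2\epsilon$, and $2\ln n$, with the trivial bound, Lemma~\ref{lem:diff_edges}, and Corollary~\ref{cor:diff_edges} on the respective ranges.

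The only difference is on $k\in[2\epsilon,2\ln n)$, where you use dyadic blocks. The paper instead bounds this range directly by $2\ln n\cdot n^{1+1/(2\epsilon)}$ and applies $\ln n<\epsilon n^{1/(2\epsilon)}$ (Lemma~\ref{lem:inequality}) to get $2\epsilon n^{1+1/\epsilon}=2e\epsilon m$. So the direct estimate you dismissed as too weak does work, provided you keep the factor $n^{1/(2\epsilon)}=\sqrt{em/n}$ rather than relaxing $\sqrt{emn}$ to $O(m)$.
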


\begin{proof}
By definition, $F'(k) \le m$ for any $k \in \nats$.
Moreover, since $\dcur$ is always at most $n-1$, $F'(n) = 0$.
Therefore, we have 
\begin{align*}
\sum_{e \in E}\iota_E(e)
& \le \sum_{k \in [1..\frac{n}{2}]} 2k (F'(2k-1)-F'(2k+1))
\le 2\sum_{k \in [1..\frac{n}{2}]} F'(2k-1)\\
& \le 2\ \left (
\sum_{k \in [1..\epsilon]} m
+\sum_{k \in [\epsilon..2\ln n]} 
\frac{1}{2}n^{1+1/k}
+\sum_{k \in [2\ln n..\frac{n}{2}]} \left (\frac{n \ln n}{k}+1 \right )
\right)\\
& = \sum_{k \in [\epsilon .. 2\ln n]} n^{1+1/k}\quad   + \quad   O(\epsilon m+ n \log^2 n).
\end{align*}

Thus, the first term is bounded by
\[
\sum_{k \in [\epsilon..2\epsilon]} n^{1+1/k} +\sum_{k \in [2\epsilon..2\ln n]} n^{1+1/k}
\le (\epsilon+1) n^{1+1/\epsilon} + 2n^{1+1/(2\epsilon)}\ln n
\le (\epsilon+1) n^{1+1/\epsilon} + 2\epsilon n^{1+1/\epsilon}
= O(\epsilon m),
\]
where in the third inequality we use the fact that, for any real number $x > 0$, $\ln n < x n^{1/(2x)}$.
A proof of this fact is provided in the appendix for completeness (Lemma~\ref{lem:inequality}).
\end{proof}

Thus, Theorem \ref{thm:kt_one_upper} follows from Remark~\ref{remark:st} and Lemmas~\ref{lem:correct}, \ref{lem:potential}, \ref{lem:nodes}, and~\ref{lem:edges}.

\begin{algorithm}[t]
\caption{$\greedy_0$. The update rules of $\vmap$, $\lmap$, and $\pdel$ are omitted from the pseudocode except for line~5; see the main text for these rules.}
\label{al:greedy_zero}
\Initially{
$\vmap=\{\nu(0)\}$, $\lmap(\nu(0),\nu(0)) = \bot$, $\pdel(\nu(0))=[1..\delta(\nu(0))] \setminus P(\nu(0),0)$\;
}
\Notation{
$\popen(v) = \{p \in [1..\delta(v)] \mid \forall u \in \vmap:\ p \neq \lmap(v,u)\}$,\\
\hspace{2cm} $\vtarget = \{v \in \vmap \mid \popen(v) \setminus \pdel(v) \neq \emptyset\}$
}
\While{$\vtarget \neq \emptyset$}{
$\timer \gets 0$\;
Reset $\pdel$\;
\While{
$\vtarget \neq \emptyset$ and $\timer < \tau(|\vmap|,|\emap^1|+\lceil \sum_{v\in\vmap} |\popen(v)|/2\rceil)$
}{
\uIf{$\vcur \in \vtarget$}{
Move via an arbitrary port in $\popen(\vcur) \setminus \pdel(\vcur)$\;
}\Else{
Move to the next node on a shortest path from $\vcur$ to $w$ in $\gmap'$,
where $w$ is an arbitrary closest node in $\vtarget$ from $\vcur$ in $\gmap'$\;
}
$\timer \gets \timer + 1$\;
}
}
\end{algorithm}

\subsection{With 0-hop view}
\label{sec:0-hop}
In the $KT_1$ model, at each time $t$, the agent obtains $\vobs(t)$ and 
$\lambda_{\nu(t)}(v)$ for each $v \in \vobs(t)$.
In contrast, in the $KT_0$ model, the agent does not have access to this information.
Therefore, the previous algorithm $\greedy_1$ no longer works in the $KT_0$ model, 
and we modify it to obtain a new algorithm $\greedy_0$.
The pseudocode of $\greedy_0$ is given in Algorithm~\ref{al:greedy_zero}.

Fortunately, only a few modifications are needed.
First, in the $KT_0$ model, the agent learns $\lambda_u(v)$ only when it traverses the edge $\{u,v\}$ in either direction, in which case it sets $\lmap(u,v) = \lambda_u(v)$ in $\greedy_0$.
Second, in $\greedy_1$, the agent always moves toward a closest node in $\vmap \setminus \vvis$ in the map.
However, in the $KT_0$ model, the agent learns the identifiers only of visited nodes, and thus $\vmap \setminus \vvis$ is always empty, i.e., $\vmap = \vvis$.
Therefore, we introduce the set of unused (or \emph{open}) ports at each $v \in \vmap$, as well as the set of target nodes (that have at least one unused port that was always observed when visited), as follows:
\[
\popen(v) = \{p \in [1..\delta(v)] \mid \forall u \in \vmap:\ p \neq \lmap(v,u)\}, \ \ 
\vtarget = \{v \in \vmap \mid \popen(v) \setminus \pdel(v) \neq \emptyset\}.
\]
By the update rule of $\lmap$, for any port $p \in [1..\delta(v)]$, we have $p \in \popen(v)$ if and only if the edge $\{v, N(v,p)\}$ has not been used before.

At any time $t$, the agent behaves as follows, provided that $\vtarget \neq \emptyset$.
If $\nu(t) \notin \vtarget$, it moves toward a closest node in $\vtarget$ in the map $\gmap' = (\vmap, \emap^1)$,
where $\emap^1$ is the edge set defined in Section~\ref{sec:1-hop}.
(We do not use $\emap^2$ because $\vmap \setminus \vvis = \emptyset$ at all times.)
Otherwise (that is, when $\nu(t) \in \vtarget$), the agent chooses any unused port $p \in \popen(\nu(t)) \setminus \pdel(\nu(t))$ and moves to the neighbor $N(\nu(t), p)$,
i.e., it traverses an unused edge.
The agent terminates once $\vtarget = \emptyset$.

The main issue here is that, unlike $\greedy_1$, the exploration time may exceed the target window size $\tau(n,m)$. Therefore, the claim of Remark~\ref{remark:st} no longer holds, and for $t \ge \tau(n,m)$, $\gmap'$ may become disconnected.
This may break the correctness of the above strategy.

To address this issue, the agent \emph{periodically} resets $\pdel$:
it sets $\pdel(v) = \emptyset$ for all $v \in V \setminus \{\vcur\}$,
while setting $\pdel(\vcur)$ to the set of unavailable ports at the current time.
A \emph{round} is a segment of the exploration that starts either at the beginning or at a reset of $\pdel$ and ends at the next reset or termination.
In each round, the agent stores in the variable $\timer$ the number of time steps that have elapsed since the beginning of the round.
Whenever $\timer$ reaches $\tau(n',m')$, where $n' = |\vmap|$ and $m' = |\emap^1| + \lceil \frac{1}{2} \sum_{v \in \vmap} |\popen(v)| \rceil$, the agent resets $\pdel$ and proceeds to the next round.
We use $n'$ and $m'$ here because the agent does not know the actual values of $n$ and $m$.
Since $n' \le n$ and $m' \le m$, we have $\tau(n',m') \le \tau(n,m)$.
Therefore, in each round, there exists a spanning tree all of whose edges are present throughout the round.
Hence, we can use an analysis similar to that in Section~\ref{sec:1-hop}.

\begin{lemma}
\label{lem:well-defined}
While $\vtarget \neq \emptyset$, there exists at least one node in $\vtarget$ in the connected component of $\gmap'$ that contains $\vcur$. Moreover, the agent terminates only after it has visited all nodes in $V$.
\end{lemma}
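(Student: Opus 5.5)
The plan is to run one cut argument on the spanning tree that survives the current round, and to use it for both claims. First I would pin down that such a tree exists. Fix a round that starts at time $t_0$ and ends at the reset or termination after it. Since $n'=|\vmap|\le n$ and every edge is counted at most once in $m'=|\emap^1|+\lceil\frac12\sum_{v\in\vmap}|\popen(v)|\rceil$, we get $m'\le m$: used edges appear in $\emap^1$ at most once, and unused edges contribute at most two open ports, halved. Hence $\tau(n',m')\le\tau(n,m)$, so the round spans at most $\tau(n,m)$ consecutive time steps, including $t_0$ itself. By $\tau(n,m)$-interval connectivity there is then a spanning tree $T_r$ all of whose edges are present at every time of the round. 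I need to check one subtle point here: $n'$ and $m'$ can change during the round, but the bound holds at every moment, so the timer condition always caps the round length by $\tau(n,m)$.

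Next comes the key invariant. During the round, if $p=\lambda_u(w)$ for an edge $\{u,w\}\in T_r$, then $p\notin\pdel(u)$. The reason is that the reset clears $\pdel$ at every node other than $\vcur$, and sets $\pdel(\vcur)$ to the ports unavailable at time $t_0$. After the reset, ports are added only when they are observed unavailable at some time within the round, and edges of $T_r$ are never unavailable then.

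For the first claim, suppose $\vtarget\neq\emptyset$. Let $C$ be the connected component of $\gmap'$ containing $\vcur$, and assume for contradiction that $C\cap\vtarget=\emptyset$. I would show $C=V$, which contradicts $\vtarget\neq\emptyset$ since $\vtarget\subseteq\vmap\subseteq V=C$. If $C\neq V$, then some edge $\{u,w\}\in T_r$ has $u\in C$ and $w\notin C$. Consider the port $\lambda_u(w)$, which lies outside $\pdel(u)$ by the invariant, and split on whether it is open.
\begin{itemize}
\item If it is open, then $u\in\vtarget$, a contradiction.
\item If it is not open, then $\lmap(u,w)$ is set, so the edge was traversed, $w\in\vmap$, and $\lmap(w,u)$ is set. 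By the invariant applied at $w$, neither $\lmap(u,w)\in\pdel(u)$ nor $\lmap(w,u)\in\pdel(w)$. Hence $\{u,w\}\in\emap^1$, which forces $w\in C$, again a contradiction.
\end{itemize}

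For the second claim, the agent terminates only when the test $\vtarget=\emptyset$ succeeds, and every such test occurs at a time belonging to some round; the test right after a reset also falls within the new round. At that time, suppose $\vmap\neq V$. Then some edge $\{u,w\}\in T_r$ has $u\in\vmap$ and $w\notin\vmap$. Because $w$ has never been visited, the edge has never been traversed, so $\lambda_u(w)\in\popen(u)$. By the invariant, $\lambda_u(w)\notin\pdel(u)$, so $u\in\vtarget$, a contradiction. Since $\vmap=\vvis$ in $KT_0$, the agent has visited all of $V$.

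I expect the main obstacle to be the bookkeeping behind the invariant rather than the cut argument itself. Two points need care: the round must fit inside a single window even though $n'$ and $m'$ grow during it, and the reset of $\pdel(\vcur)$ must use exactly the time-$t_0$ observation, so that edges of $T_r$ cannot be marked deleted at the reset time.
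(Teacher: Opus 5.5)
Your proposal is correct and follows essentially the same route as the paper. Both arguments use the spanning tree $T_r$ that is present throughout the round, and show that any tree edge leaving the component of $\vcur$ in $\gmap'$ is either used (so, since its ports never enter $\pdel$ after the reset, it already lies in $\emap^1$) or unused (so its endpoint inside the component is a target), while making explicit the $\pdel$-reset invariant and the open/used case split that the paper's two-case sketch leaves implicit.
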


\begin{proof}
In each round $r$, there exists a spanning tree $T_r$ such that all its edges are present throughout the round. At any time during round $r$, one of the following holds:
(i) $\gmap'$ contains every edge of $T_r$, or
(ii) some node incident with an unused edge in $T_r$ is reachable from $\vcur$ in $\gmap'$.
Therefore, if $\vtarget \neq \emptyset$, there exists a node in $\vtarget$ that is reachable from $\vcur$ in $\gmap'$; otherwise, $\gmap'$ contains all edges of $T_r$, implying that all nodes have already been visited.
\end{proof}

Let $n_r = |\vmap|$ and $m_r = |\emap^1| + \lceil \frac{1}{2} \sum_{v \in \vmap} |\popen(v)| \rceil$ at the end of round $r$. 
By definition, each round $r$ completes in $\tau(n_r,m_r)$ time.
If the agent traverses an unused edge $e=\{u,v\}$ from $u$ to $v$, but the destination $v$ has already been visited, we call this movement a \emph{redundant movement}.
Such movements are inevitable in the $KT_0$ model, since the agent does not know the destinations of unused edges. 
Let $z_r$ be the number of redundant movements made by the agent during round $r$. Then the following lemma holds.



\begin{lemma}
\label{lem:open_decrease}
If $\vtarget \neq \emptyset$ at the end of round $r$, then $z_r \ge \frac{\tau(n_r,m_r)}{2n}$.
\end{lemma}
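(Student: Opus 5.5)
The plan is to reuse the potential argument of Section~\ref{sec:1-hop} inside a single round. We then show that if few redundant movements occur, the round must reach $\vtarget=\emptyset$ before $\timer$ reaches $\tau(n_r,m_r)$. Fix a round $r$ that ends because $\timer$ reaches $\tau(n_r,m_r)$, with $\vtarget\neq\emptyset$ at that moment. Let $T_r$ be the spanning tree whose edges are present throughout the round; it exists because $\tau(n',m')\le\tau(n,m)$. Define the potential
\[
\dcur=\min\{d_{\gmap'}(\vcur,w)\mid w\in\vtarget\}.
\]
This is well defined by Lemma~\ref{lem:well-defined}, and it is always at most $n-1$.

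The time steps of the round fall into three kinds.
\begin{itemize}
\item Moves along a shortest path in $\gmap'$. Each decreases $\dcur$ by one, except for increases caused by simultaneous map updates, which are accounted for below.
\item Moves through an unused port that reach a new node. There are at most $n$ of these in total.
\item Redundant movements. There are $z_r$ of these, and each increases $\dcur$ by at most $n-1$, since $\dcur\le n-1$ always.
\end{itemize}
As in Lemma~\ref{lem:potential}, the number of steps of the round is at most $n + z_r$ plus the total increase of $\dcur$. That total increase splits into four contributions:
\begin{enumerate}
\item at most $(n-1)z_r$ from redundant movements;
\item the increases $\iota_V$ at first visits of new nodes;
\item the increases $\iota_E$ when edges of $\emap^1$ are eliminated because a port enters $\pdel$;
\item the increases when a node leaves $\vtarget$, because its last open port was either used or entered $\pdel$.
\end{enumerate}

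Contribution 2 is handled exactly as in Lemma~\ref{lem:nodes}. Apply Lemma~\ref{lem:tee_dist} to $T_r$ restricted to the order of first visits in the round. Some node on the $T_r$-path toward the next visited node is reachable in $\gmap'$ and is a target, because it is incident to an unused, always-present $T_r$ edge. This gives a total of at most $2n\log n$.

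For contributions 3 and 4, I would treat each open port as an edge whose far endpoint is unknown and run the girth argument of Lemma~\ref{lem:diff_edges}. The objects whose removal raises $\dcur$ by at least $2k-1$ are eliminated map edges, or ports whose disappearance removes a target. Together with $T_r$ they should form a graph of girth at least $2k+1$. The reason is the same as before: if a short cycle existed, consider the first of its elements removed. Right after that removal, the remaining present edges of the cycle would lead $\vcur$ to a target within distance $2k-1$, a contradiction. The number of such objects is at most $|\emap^1|+\sum_v|\popen(v)|\le 2m_r$. Here one uses that $m'=|\emap^1|+\lceil\frac12\sum_{v}|\popen(v)|\rceil$ is non-decreasing along the round, since using an open edge only moves it into $\emap^1$. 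Corollary~\ref{cor:diff_edges} and the summation of Lemma~\ref{lem:edges} then give
\[
\text{(contribution 3)}+\text{(contribution 4)}\le C(\epsilon(n_r,m_r)m_r+n_r\log^2 n_r)
\]
for an absolute constant $C$ independent of $c$.

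Putting these together,
\[
\tau(n_r,m_r)\le n+z_r+(n-1)z_r+2n\log n+C(\epsilon m_r+n\log^2 n)\le n z_r+\tfrac12\tau(n_r,m_r),
\]
once $c$ is chosen large enough. (Some care is needed to compare $n$ with $n_r$; I would restate the bounds with $n_r$ wherever possible, since all nodes involved lie in $\vmap$ at the end of the round.) Hence $z_r\ge\tau(n_r,m_r)/(2n)$. I expect the main obstacle to be the girth argument for contribution 4, because an open port has no known endpoint. My first attempt would be to charge each open port to its true underlying edge in $E$, so that the girth argument is applied to a genuine subgraph of $G$. One then has to check that the first-removed-element argument still goes through when some cycle edges have unvisited status in the map.
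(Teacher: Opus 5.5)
Your overall framework matches the paper's. You use the potential $\dcur'=\min\{d_{\gmap'}(\vcur,u)\mid u\in\vtarget\}$ within a round, charge at most $n_r-1$ per redundant movement, apply Lemma~\ref{lem:tee_dist} on $T_r$ for node events, use the girth bound for edges eliminated from $\emap^1$, and take $c$ large. The genuine gap is your contribution~4, which you route through the girth argument by treating open ports as edges. You flag this as the main obstacle, and as sketched it fails.
\begin{itemize}
\item A not-yet-removed port object $(a,p)$ on the short cycle $C$ is an \emph{unused} edge $\{a,b\}$ that is not in $\gmap'$. It certifies a target only at $a$. The port $\lambda_b(a)$ may already lie in $\pdel(b)$, namely if the edge was absent whenever the agent was at $b$.
\item Hence a walk from $\vcur$ around $C$ that reaches $b$ before $a$ can stop at a node that is not a target, and the first-removed-element contradiction is lost.
\item There is also an off-by-one problem. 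Just before a node leaves $\vtarget$ the agent stands on it, so $\dcur'=0$. A target at distance $|C|-1\le 2k-1$ afterwards is therefore consistent with an increase of $2k-1$. Lemma~\ref{lem:diff_edges} needed $\dcur\ge 1$ before the removal.
\end{itemize}

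The missing idea is that these events need no girth argument. Within a round $\popen(v)\setminus\pdel(v)$ only shrinks, so each node leaves $\vtarget$ at most once, and departures are node events. The paper folds your contributions~2 and~4 into a single $\iota'_V$, treating a newly reached non-target node as leaving $\vtarget$ on arrival. It then bounds $\iota'_V$ as in Lemma~\ref{lem:nodes}. Order the nodes by the time they leave $\vtarget$, with nodes that never leave placed last. When $v_i$ leaves, walk along the $T_r$-path toward a later node.
\begin{itemize}
\item Every used $T_r$-edge is in $\emap^1$, because its ports never enter $\pdel$ during the round.
\item The first unused $T_r$-edge met has a visited near endpoint, and that endpoint is a target.
\item If no unused edge is met, the later node is visited and has not yet left, so it is a target.
\end{itemize}
Thus $\iota'_V(v_i)\le f(i)$, and Lemma~\ref{lem:tee_dist} bounds the total.

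Two smaller steps also fail as written. First, your reason why $m'$ is non-decreasing is false. When a port of an edge of $\emap^1$ enters $\pdel$ (exactly your contribution-3 events), $m'$ drops, so the count ``at most $2m_r$'' needs another justification. Second, the $n$-versus-$n_r$ issue is not cosmetic: your final display fails when $n_r\ll n$. For node events, apply the ordering argument separately to each component of the forest of used $T_r$-edges in $\vmap$. Each such component contains a node still in $\vtarget$ at the end of the round, so the total is $2n_r\log n_r$.
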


\begin{proof}
Before the end of the round, we always have $\vtarget \neq \emptyset$.
Define the potential
$\dcur' = \min\{d_{\gmap'}(\vcur,u) \mid u \in \vtarget\}$.
By Lemma~\ref{lem:well-defined}, we have $\dcur' \le n_r-1$ at all times.

Whenever $\nu(t) \notin \vtarget$ at time $t$, the agent moves toward a closest node in $\vtarget$, and thus $\dcur'$ decreases by one. However, the agent then observes the unavailable ports at $\nu(t+1)$, which may increase $\dcur'$. We consider two cases: (i) $\nu(t+1)$ leaves $\vtarget$ due to this observation, and (ii) $\vtarget$ remains unchanged. In the first case, $\dcur'$ may increase at most by $n_{r}-1$. To bound this increase, we define $\iota'_V(\nu(t+1))$ as the increase in $\dcur'$ when $\nu(t+1)$ leaves $\vtarget$. In the second case, let $\bar{E'_t}$ be the set of edges removed from $\gmap'$ due to this observation (and set $\bar{E'_t}=\emptyset$ in the first case). We fix an arbitrary order on these edges and assume that they are removed sequentially. For any $e \in \bar{E'_t}$, we define $\iota'_E(e)$ as the increase in $\dcur'$ when $e$ is removed from $\gmap'$.

Whenever $\nu(t) \in \vtarget$ at time $t$, the agent moves via an unused edge. If the movement is redundant, $\dcur'$ may increase by at most $n_r-1$; otherwise, $\nu(t+1)$ is newly added to $\vmap$. If the newly added node $\nu(t+1)$ belongs to $\vtarget$, then $\dcur'$ remains zero; otherwise, it may increase. In this case, we treat the event as if $\nu(t+1)$ leaves $\vtarget$, so that the increase in $\dcur'$ is given by $\iota'_V(\nu(t+1))$.



We define $\iota'_V(v) = 0$ for any node $v$ that does not leave $\vtarget$ during round $r$.
Similarly, we define $\iota'_E(e) = 0$ for any edge $e$ that does not belong to $\bar{E'_t}$ at any time $t$ during round $r$.

Since round $r$ lasts exactly $\tau(n_r,m_r)$ time steps, the initial value of the potential $\dcur'$ is at most $n_r - 1$, and $\vcur \in \vtarget$ holds for at most $m_r$ time steps, we obtain
$\tau(n_r,m_r) \le (n_r - 1) + m_r + (n_r - 1)z_r + \sum_{v \in V}\iota'_V(v) + \sum_{e \in E}\iota'_E(e)$.
In the same way as in Lemmas~\ref{lem:nodes} and~\ref{lem:edges},
we obtain $\sum_{v \in V}\iota'_V(v) + \sum_{e \in E}\iota'_E(e) = O(\epsilon(n_r,m_r)\cdot m_r + n_r \log^2 n_r)$.
This sum is less than $\tau(n_r,m_r)/2 - n_r - m_r$, since we can choose a sufficiently large constant $c$ for the hidden constant in $\tau(n,m)$.
Thus, $z_r \ge \frac{\tau(n_r,m_r)}{2n}$.
\end{proof}

\begin{corollary}
\label{cor:total_time}
The agent terminates in $O((m-n+1)n+n\log^2 n)$ time.
\end{corollary}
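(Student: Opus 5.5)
We charge the rounds of $\greedy_0$ to redundant movements via Lemma~\ref{lem:open_decrease}, and bound the total number of redundant movements by the cyclomatic number $m-n+1$.

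\textbf{Step 1: few redundant movements.} Each redundant movement traverses a previously unused edge $\{u,v\}$ whose endpoint $v$ was already in $\vmap$. After the move, both $\lmap(u,v)$ and $\lmap(v,u)$ are set, so the edge is never unused again. Each edge can therefore be the subject of at most one such first traversal. Classify first traversals of edges into two kinds: those reaching a new node, of which there are exactly $|\vmap|-1\le n-1$ over the whole execution (they form a spanning tree of the visited set), and redundant ones. Since the underlying graph has $m$ edges, the total number of redundant movements satisfies $Z=\sum_r z_r\le m-(n-1)$. More precisely, $Z\le m-|\vmap_{\text{final}}|+1$, and by Lemma~\ref{lem:well-defined} the final $\vmap$ equals $V$.

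\textbf{Step 2: time of non-final rounds.} Every round except the last ends with $\vtarget\neq\emptyset$. For such a round, Lemma~\ref{lem:open_decrease} gives $\tau(n_r,m_r)\le 2n z_r$. Summing over non-final rounds, their total duration is at most $2nZ\le 2n(m-n+1)$.

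\textbf{Step 3: the last round.} The last round lasts at most $\tau(n_r,m_r)\le\tau(n,m)=O(\epsilon m+n\log^2 n)$. This uses $n_r\le n$ and $m_r\le m$: the count $m_r$ counts used edges among visited nodes plus half the open ports, and open ports correspond to distinct unused edges up to the factor $1/2$. It also uses monotonicity of $\tau$ in both arguments, which I would check briefly. We now need to absorb $\epsilon m$ into $(m-n+1)n+n\log^2 n$.

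Since $\epsilon\le\ln n$ and $m\le\binom n2$, we have $\epsilon m\le m\ln n$. If $m-n+1\ge \ln n$, then $m\ln n\le n(m-n+1)+n\ln n$, because $m=(m-n+1)+(n-1)$ and $(m-n+1)\ln n\le (m-n+1)n$. Otherwise $m-n+1<\ln n$, so $m<n+\ln n$ and $m\ln n=O(n\log n)$, which is absorbed by $n\log^2 n$. In either case $\tau(n,m)=O((m-n+1)n+n\log^2 n)$. Adding the bounds from Steps 2 and 3 yields the claimed total time.

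\textbf{Main obstacle.} The delicate point is Step 1, namely that redundant movements are counted correctly across rounds. Resets of $\pdel$ do not reset $\lmap$, so "unused" is a global, monotone notion; this is what makes the counting go through. I expect the absorption of $\epsilon m$ in Step 3 to be routine with the case split above.
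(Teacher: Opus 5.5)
Your proposal is correct and follows the same route as the paper: bound the redundant movements by $m-n+1$, charge every non-final round to them via Lemma~\ref{lem:open_decrease} to get at most $2(m-n+1)n$ steps in total, and bound the last round by $\tau(n,m)$. You also spell out what the paper leaves implicit, namely why there are at most $m-n+1$ redundant movements and why $\epsilon m \le m\ln n = O((m-n+1)n + n\log n)$ (your case split there is unnecessary, since the first case's argument holds for all $m$). One small fix: in Step~1 you appeal to the ``final'' $\vmap$ before termination is established; you can avoid this by noting that, since $G$ is connected, at every time at least $n-|\vmap|$ edges of $G$ lie outside $G[\vmap]$, which gives $Z\le m-n+1$ at all times.
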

\begin{proof}
By definition, the agent makes at most $m-n+1$ redundant movements.
Thus, by Lemma~\ref{lem:open_decrease}, the total time required for all rounds except the last is at most $2(m-n+1)n$. The last round also completes within $\tau(n,m)=O(\epsilon m+n\log^2 n)=O((m-n+1)n+n\log^2 n)$ time by definition.
\end{proof}

Thus, Theorem~\ref{thm:kt_zero_upper} follows from Lemma~\ref{lem:well-defined} and Corollary~\ref{cor:total_time}.
 
\section{Lower Bounds}
\label{sec:lower}
In this section, we prove the lower bounds stated in Section~\ref{sec:contribution}.
Specifically, we prove Theorem~\ref{thm:psi_nm_lower},~\ref{thm:kt_zero_lower}, and~\ref{thm:kt_one_lower}.

\begin{figure}[t]
  \centering
  \includegraphics[width=0.5\linewidth]{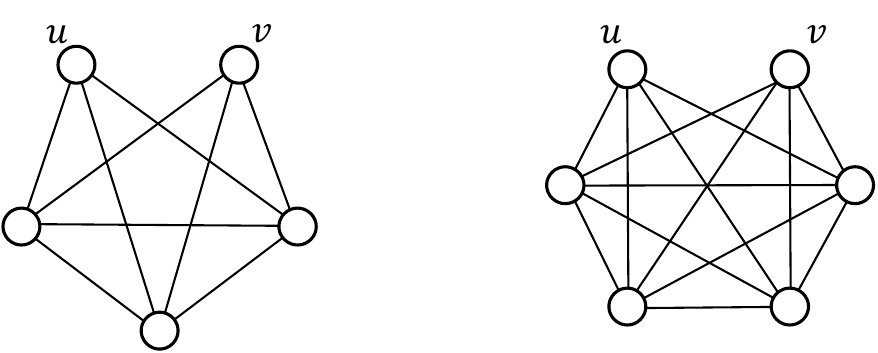}
  \caption{$K_5(u,v)$ (left) and $K_6(u,v)$ (right)}
 \label{fig:gadget}
\end{figure}

To prove Theorem~\ref{thm:psi_nm_lower}, we introduce the graph $K_n(u,v)$ as a gadget, obtained by removing the edge $\{u,v\}$ from the complete graph on $n$ nodes (see Figure~\ref{fig:gadget} for examples).
We call $u$ and $v$ the \emph{gates} of $K_n(u,v)$, and the remaining nodes \emph{traps}. 
The key advantage of this gadget is that once the agent visits any trap, the adversary can confine the agent within the gadget for $\Omega(n^2)$ time, as formalized in the following lemma.

\begin{lemma}
\label{lem:clique}
For any deterministic algorithm $\calA$ and any sufficiently large $n$, there exists an $\infty$-interval-connected graph $\calG$ with underlying graph $K_n(u,v) = (V,E)$ such that a single agent running $\calA$ in the $KT_1$ model, starting from a node other than $u$ or $v$, requires $\Omega(n^2)$ time to reach either $u$ or $v$.
This claim holds even if the agent knows the entire structure of the underlying graph, including the port assignments $\lambda_x(y)$ for all $x, y \in V$, a priori.
\end{lemma}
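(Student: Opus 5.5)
The plan is to build the dynamic graph adaptively against the fixed deterministic algorithm $\calA$; since $\calA$ is deterministic this is legitimate. The adversary maintains an edge set $E_{\mathrm{perm}}$ of edges that will be present at \emph{every} time step. An edge may be shown absent at some observation only if it is never put into $E_{\mathrm{perm}}$. At the end I must check that $(V,E_{\mathrm{perm}})$ is connected, which gives $\infty$-interval connectivity.

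Because the agent may know all port assignments, its only information is which edges it has seen absent. So the adversary must make the agent repeatedly \emph{probe} nodes before it finds one with a usable edge toward the gates. The basic observation is this: an edge $\{x,y\}$ that will be permanent is visible as present at every visit of the agent to $x$ or $y$. Hence once the agent stands at a node with a permanent edge toward a gate, the adversary can no longer stop it from advancing. The adversary must therefore postpone committing any such edge.

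The concrete strategy I would try is as follows. Let $A$ be the set of visited traps and $B$ the set of unvisited traps together with $u$ and $v$. All edges inside $A$ are always present; this keeps the agent's region connected and costs the adversary nothing. Call a node $x\in A$ \emph{live} if none of its edges to $B$ has yet been shown absent. When the agent visits a live $x$ and some other live node exists, the adversary deletes all edges from $x$ to $B$, and $x$ becomes dead permanently. When $x$ is the unique live node, its edges to $B$ are shown, the agent may leave $A$, and the new node joins $A$.

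Connectivity of the final permanent graph is then immediate. The last live node of each phase keeps its edges to $B$, so each node entering $A$ is attached to $A$ by a permanent edge. Moreover, $u$ and $v$ are attached only at the very end. The lower bound then reduces to showing that the $k$-th phase, with $|A|=k$, costs $\Omega(k)$ steps, because $\sum_{k\le n-2}\Omega(k)=\Omega(n^2)$.

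The hardest step, and where the naive rule above fails, is keeping the number of live nodes in each phase $\Theta(k)$. With the rule as stated, after a new node $y$ joins, only $y$ and its predecessor are live, since dead nodes cannot be revived. So the agent pays just $O(1)$ per phase. I would first try to fix this by \emph{not} deleting all of $x$'s edges to $B$ at once. Instead the adversary deletes only the edges from $x$ to a shrinking "candidate" subset of $B$, chosen so that every visited node keeps live edges to some untouched part of $B$. Concretely, I would maintain a nested chain $B=B_0\supseteq B_1\supseteq\cdots$ and let a visit to $x$ kill $x$'s edges only to the current innermost candidate block, so that $x$ stays live for later blocks.

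The accounting would then be a potential argument. Each visit to a node of $A$ deletes at most $O(n)$ trap edges. Each time $|A|$ grows, at least $\Omega(|A|)$ visits must have occurred since the previous growth, because the agent must exhaust all live candidates of the current block. Summing these costs over the phases gives the claimed $\Omega(n^2)$. If balancing the blocks turns out to be delicate, I would fall back to a cruder counting argument: the clique has $\Theta(n^2)$ trap--trap edges, each visit can eliminate only $O(1)$ "candidate" pairs, and the agent cannot reach a gate until some designated pair is exhausted.
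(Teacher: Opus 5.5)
Your diagnosis of the naive rule is correct, but the repair does not work, and the accounting you then assert is never established. So there is a genuine gap at the central step.

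\begin{itemize}
\item In $KT_1$ the agent at $x$ sees every present edge at $x$. If the adversary deletes only the edges from $x$ to an innermost block $B_j$, any present edge from $x$ to an unvisited node of $B\setminus B_j$ lets the agent leave $A$ at once. Hence ``$\Omega(|A|)$ visits between consecutive growths of $|A|$'' cannot be obtained this way, while deleting all $x$--$B$ edges is just the naive rule again.
\item The description is self-contradictory. For a decreasing chain $B_0\supseteq B_1\supseteq\cdots$, killing $x$'s edges to $B_j$ also kills them to all later blocks, so $x$ does not ``stay live.''
\item You never say which node joins $A$ or the next block, which is exactly what the connectivity argument needs.
\item The naive rule as written lets the unique live node show its edges to $u$ and $v$, since $u,v\in B$. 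This contradicts your claim that the gates are attached only at the end.
\item The fallback count has neither a defined notion of ``candidate pair'' nor an adversary behind it.
\end{itemize}

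The missing idea is to stop measuring progress by the visited set. The paper confines the agent to the complement of a \emph{growing} blocked set $\vbl$, initially $\{u,v\}$. At each step it deletes only the edges from $\vcur$ to $\vbl$, so the agent roams freely over all free nodes, visited or not.

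As soon as the agent has visited all free nodes but one since the last update, that one node is added to $\vbl$. It was not visited during the phase and was not blocked earlier, so its edge to the previously blocked node has never been deleted. The blocked nodes $w_1,w_2,\dots$ therefore form a path that is present at all times. Attaching $u,v$ to $w_1$ and the two never-blocked nodes to the last $w_i$ gives a spanning tree present at all times, hence $\infty$-interval connectivity.

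Because the target set grows, every free node becomes live again with respect to each newly blocked node. This is exactly the revival your framework rules out. A phase with $|\vbl|=k$ needs at least $n-2-k$ moves, and $\sum_{k=2}^{n-3}(n-2-k)=\Omega(n^2)$. Note that the phase costs \emph{decrease}, and the agent may visit almost every trap in the first phase without harm.
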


\begin{proof}
Note that $\calG$ can depend on algorithm $\calA$, which is deterministic. 
Therefore, it suffices to describe an adversarial strategy that adaptively determines, at each time $t$, which edges are present based on $\nu(t)$ (i.e., the current location of the agent), thereby preventing the agent from reaching a designated node $u$ or $v$ for $\Omega(n^2)$ time while preserving $\infty$-interval connectivity.

The adversary maintains a set $\vbl \subseteq V$ of \emph{blocked} nodes, a set $\vdone \subseteq V$ of nodes visited by the agent since the last update of $\vbl$, and a set $\edel$ of \emph{deleted} edges.
The set $\edel$ represents the output of the adversarial strategy. 
That is, at each time step $t$, the adversary defines $\calE(t) = E \setminus \edel$.
Initially, $\vbl = \{u,v\}$, $\vdone = \{\nu(0)\}$, and $\edel = \emptyset$.
The adversary proceeds as follows:
\begin{enumerate}
\item $\edel \gets \edel \cup \{ \{\vcur,w\} \in E \mid w \in \vbl\}$.
\item Let the agent make one move according to $\calA$.
\item $\vdone \gets \vdone \cup \{\vcur\}$.
\item If $|\vdone| + |\vbl| < n-1$, return to Step~1.
\item $\vbl \gets \vbl \cup \{v\}$, where $v$ is the unique node in $V \setminus (\vbl\cup\vdone)$.
\item $\vdone \gets \{\vcur\}$.
\item If $|\vbl| < n-2$, return to Step~1; otherwise, terminate.
\end{enumerate}

The agent moves only in Step~2, whereas Step~1 ensures that all edges between the current location and the blocked nodes, including $u$ and $v$, belong to $\edel$, and hence are never present in Step~2. 
Therefore, the agent never visits $u$ or $v$ before this strategy terminates.
The agent must make at least $n-2-|\vbl|$ moves to increase $|\vbl|$ by one.
Thus, the total number of moves is at least
$\sum_{k=2}^{n-3} (n-2-k) =  \Omega(n^2)$.
This lower bound is independent of the algorithm $\calA$.

It remains to show that the strategy preserves $\infty$-interval connectivity. 
Let $w_1, w_2, \dots, w_{n-4}$ be the nodes added to $\vbl$ in this order, and let $x$ and $y$ be the remaining two nodes in $V \setminus \vbl$.
We define a spanning tree $\tinf = (V_T,E_T)$, where
\[
E_T = \{\{u,w_1\}\} \cup \{\{v,w_1\}\} \cup \{\{w_i,w_{i+1}\} \mid i \in [1..n-5]\} \cup \{\{w_{n-4},x\}\} \cup \{\{w_{n-4},y\}\}.
\]
Throughout the above strategy, every edge of this tree is present.
Hence, $\infty$-interval connectivity is preserved.
\end{proof}


\begin{figure}[t]
  \centering
  \includegraphics[width=0.5\linewidth]{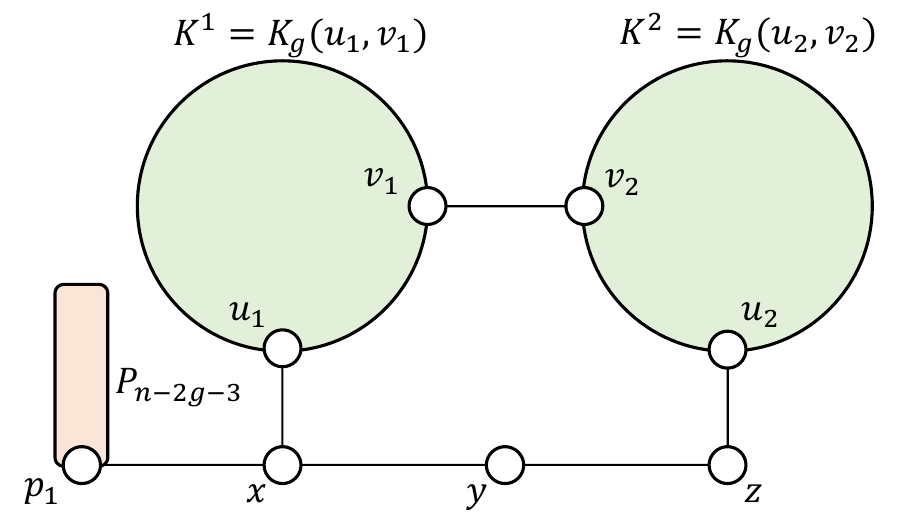}
  \caption{The underlying graph for the proof of Theorem~\ref{thm:psi_nm_lower}, excluding null edges}  
 \label{fig:psi_nm_lower}
\end{figure}

Then, we prove Theorem~\ref{thm:psi_nm_lower}, which states that
$\psi_k(n,m) = \Omega(m)$ for any $k \in \{0,1\}$.

\begin{proof}[Proof of Theorem~\ref{thm:psi_nm_lower}]
We prove that, for any sufficiently large $n$ and any $m \in [n..\binom{n}{2}]$, there exists an underlying graph $G=(V,E)$ with $n$ nodes and $m$ edges such that the adversary can prevent the agent from visiting all nodes while preserving $c' m$-interval connectivity for some constant $c'$.
We prove this by considering two cases: $2n \le m \le \binom{n}{2}$ and $n \le m < 2n$.

First, consider the case $2n \le m \le \binom{n}{2}$.
Let $g = \lfloor\sqrt{m/8}\rfloor$.
We define $G_0 = (V,E_0)$ as the graph obtained by merging two copies of the gadget, $K^1=K_g(u_1,v_1)$ and $K^2=K_g(u_2,v_2)$,
a path graph $P_{n-2g-3} = (\{p_1,p_2,\dots,p_{n-2g-3}\},\{\{p_i,p_{i+1}\} \mid i \in [1..n-2g-4]\})$,
and three nodes $x$, $y$, and $z$, connecting them with the following edges
(see Figure~\ref{fig:psi_nm_lower}):
\[
 \{u_1,x\}, \{x,y\}, \{y,z\}, \{z,u_2\}, \{v_1,v_2\}, \text{ and } \{x,p_1\}.
\]
(Note that $2g+3 \le n/2+3 < n$ for sufficiently large $n$.)
Clearly, $|V| = n$. However,
\[
|E_0| = 2\left (\binom{g}{2} - 1\right)  + n - 2g - 4 + 6< g^2 + n < \frac{m}{2} + n < m,
\]
for sufficiently large $n$.
Therefore, to match the desired number of edges in $G$, we add $m - |E_0|$ \emph{null} edges to $G_0$ to obtain $G$, where a null edge is an edge that is never present when the agent visits its endpoints and is therefore never available to the agent.


As in the proof of Lemma~\ref{lem:clique}, it suffices to describe an adaptive adversarial strategy that determines, at each time $t$, which edges are present based on $\nu(t)$, thereby preventing the agent from visiting all nodes while preserving $T$-interval connectivity, where $T = c' m$ for a sufficiently small constant $c'$.

The adversary maintains a variable $\nxt \in \{1,2\}$ such that $\nxt = i$ indicates that no edge of $K^i$ has been deleted during the past $T$ time steps.
Thus, by Lemma~\ref{lem:clique}, once the agent visits any trap node of $K^{\nxt}$, the adversary can confine the agent within $K^{\nxt}$ for the next $T=c'm \le 2c' (g+1)^2$ time steps while preserving $T$-interval connectivity.
(Note that $c'$ can be an arbitrarily small constant.)
Initially, $\nxt = 1$.

The goal of the adversary is to prevent the agent from visiting $y$.
We may choose any node other than $y$ as the initial location $\nu(0)$.
Whenever the agent visits a trap node in $K^{\nxt}$, the adversary confines it within $K^{\nxt}$ for $T$ time steps.
During this period, it does not delete any (non-null) edge outside $K^{\nxt}$, in particular, any edge of $K^{3-\nxt}$.
Thus, it can safely switch the value of $\nxt$ to $3-\nxt$ after the confinement period ends.
In addition, the adversary deletes the edge $\{x,y\}$ (resp.~$\{y,z\}$) at time $t$ if and only if $\nu(t)=x$ (resp.~$\nu(t)=z$), so the agent can never visit $y$.
This rule would violate $T$-interval connectivity if the agent visits both $x$ and $z$ within $T$ time steps, thereby isolating $y$.
However, this does not occur, because whenever the agent moves from $x$ to $z$ or from $z$ to $x$, it must visit a trap node in $K^{\nxt}$, regardless of whether $\nxt = 1$ or $\nxt = 2$, and hence requires at least $T$ time steps to reach the destination.
(Note that $\{u_1,v_1\}, \{u_2,v_2\} \notin E_0$.)

Thus, the agent can never visit $y$, while $T = \Omega(m)$-interval connectivity is preserved, which completes the proof, in the case $m \ge 2n$.

Next, we consider the case $n \le m < 2n$, which is much simpler to handle.
As an underlying graph, we consider a cycle $C_n=(\{v_0,v_1,\dots,v_{n-1}\},\{\{v_i,v_{i+1 \bmod{n}}\} \mid i \in [0..n-1]\})$,
adding $m-n$ null edges. 
We may choose any node other than $v_1$ as $\nu(0)$. 
Then, the adversary can easily prevent the agent from visiting $v_1$:
whenever the agent visits $v_0$ (resp.~$v_2$), it removes the edge $\{v_0,v_1\}$ (resp.~$\{v_1,v_2\}$).
This schedule preserves $T$-interval connectivity for $T = n - 3 > m/2 - 3 = \Omega(m)$, since the agent requires $n - 2$ steps to move from $v_0$ to $v_2$ (resp.~from $v_2$ to $v_0$), so once $\{v_0,v_1\}$ (resp.~$\{v_1,v_2\}$) is deleted, $\{v_1,v_2\}$ (resp.~$\{v_0,v_1\}$) remains present for the next $n - 3$ steps.
\end{proof}


We can also use the gadget $K_n(u,v)$ to prove Theorem~\ref{thm:kt_one_lower}, which states that in the $KT_1$ model, every deterministic algorithm requires $\Omega(m)$ time to explore every $\infty$-interval-connected graph with $n$ nodes and $m$ edges. 
The proof is similar to that of Lemma~\ref{lem:clique}, but more direct.

\begin{proof}[Proof of Theorem~\ref{thm:kt_one_lower}]
If $m=O(n)$, the theorem is trivial, so we consider the case $m=\omega(n)$.
Given any sufficiently large $n$ and any $m =\omega(n)$,
we construct an underlying graph $G=(V,E)$ with $n$ nodes and $m$ edges.
Let $g = \lfloor \sqrt{m} \rfloor < n$.
We define $G_0 = (V,E_0)$ as the graph obtained by connecting a path graph $P_{n-g} = (\{p_1,p_2,\dots,p_{n-g}\},\{\{p_i,p_{i+1}\} \mid i \in [1..n-g-1]\} )$ 
and $K_g(u,v)$ with an edge $\{p_1,u\}$.
Clearly, $|V| = n$. However,
\[
|E_0| = \binom{g}{2} - 1 + n - g \le m/2-1 + n = m-\omega(n).
\]
Therefore, as in the proof of Theorem~\ref{thm:psi_nm_lower}, we add $m - |E_0|$ \emph{null} edges to $G_0$ to obtain $G$, which are never present whenever the agent visits their endpoints.
Then, by Lemma~\ref{lem:clique}, we can prevent the agent from visiting either gate $u$ or $v$ for $\Omega(g^2) = \Omega(m)$ time while preserving $\infty$-interval connectivity, provided that the agent starts from a trap node, i.e., $\nu(0) \notin \{u,v\}$.
\end{proof}

\begin{figure}[t]
  \centering
  \includegraphics[width=0.8\linewidth]{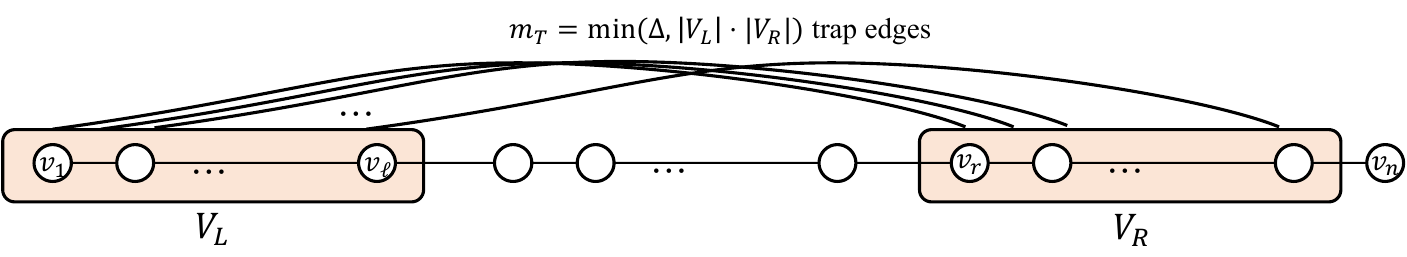}
  \caption{The underlying graph for the proof of Theorem~\ref{thm:kt_zero_lower}, excluding null edges}  
 \label{fig:kt_zero_lower}
\end{figure}

Lastly, we prove Theorem~\ref{thm:kt_zero_lower}, which states that in the $KT_0$ model, every deterministic algorithm requires $\Omega((m - n + 1)n)$ time to explore every $\infty$-interval-connected graph with $n$ nodes and $m$ edges.

\begin{proof}[Proof of Theorem \ref{thm:kt_zero_lower}]
Given any sufficiently large $n$ and any $m \in [n .. \binom{n}{2}]$,
we construct an underlying graph $G = (V,E)$ with $n$ nodes and $m$ edges. 
Let $\Delta = m - n + 1$, $\ell = \lfloor n/3 \rfloor$, and $r = \lfloor 2n/3 \rfloor$.
The node set is $V = \{v_1, v_2, \dots, v_n\}$. 
Let $V_L = \{v_1, v_2, \dots, v_{\ell}\}$ and $V_R = \{v_r, v_{r+1}, \dots, v_{n-1}\}$.
Note that the rightmost node $v_n$ does not belong to $V_R$.
The underlying graph $G$ has three types of edges: \emph{path edges}, \emph{trap edges}, and \emph{null edges} (Figure \ref{fig:kt_zero_lower}). 
The path edges form a path on the $n$ nodes; that is, $G$ contains the edge $\{v_i, v_{i+1}\}$ for all $i \in [1..n-1]$.
Let $m_T = \min(\Delta, |V_L| \cdot |V_R|)$.
We choose any $m_T$ trap edges, each connecting a node in $V_L$ and a node in $V_R$.
If $m_T < \Delta$, we add $\Delta - m_T$ null edges to ensure $|E| = m$.

As in the proof of Lemma~\ref{lem:clique}, it suffices to describe an adaptive adversarial strategy that determines, at each time $t$, which edges are present based on $\nu(t)$, thereby preventing the agent from visiting all nodes while preserving $\infty$-interval connectivity. Moreover, 
whenever the agent located at a node $v$ uses an \emph{unused} port $p \in [1..\delta(v)]$ to move, the adversary can decide, in a delayed manner, which \emph{unused} edge incident to $v$ is associated with $p$,
since we assume the $KT_0$ model and we consider only deterministic algorithms.

We describe a simple adversarial strategy. First, as in the proofs of the above lemmas, whenever the agent visits a node $v$, we delete all null edges incident to $v$. Next, we never remove any path edge at any time. Therefore, $\infty$-interval connectivity is preserved regardless of how we delete the trap edges.
We choose $v_1$ as the starting node, i.e., $\nu(0) = v_1$, and prevent the agent from visiting $v_n$ for $\Omega(m_T \cdot n)=\Omega((m-n+1)n)$ time steps. Whenever the agent visits a node in $V_L$, we delete all trap edges incident to that node. Therefore, once the agent visits a node in $V_L$, it requires $r - \ell = \Omega(n)$ time to visit any node in $V_R$.
Moreover, whenever the agent attempts to traverse an unused edge at a node $v \in V_R$ (i.e., it chooses an unused port at $v$), we let the agent traverse an unused trap edge as long as such an edge exists. As a result, for each $v_i \in V_R$, the agent requires $\Omega(k_i \cdot n)$ time to visit $v_{i+1}$ after it first visits $v_i$, where $k_i$ is the number of trap edges incident to $v_i$.
Therefore, the agent requires 
$\sum_{i \in [r..n-1]} \Omega(k_i \cdot n) = \Omega(m_T \cdot n) = \Omega(\min(\Delta, n^2)\cdot n) = \Omega((m - n + 1)n)$ time to visit $v_n$.
\end{proof}

\section{Conclusion}
\label{sec:conclusion}

In this paper, we studied deterministic single-agent exploration in $T$-interval-connected graphs under two visibility models, $KT_0$ and $KT_1$.
We focused on two natural questions: the minimum window size that guarantees exploration, and the exploration time achievable once the window size is sufficiently large.

We first established upper bounds for both models.
In particular, we presented deterministic exploration algorithms whose required window size is
$O(\epsilon(n,m)\cdot m + n \log^2 n)$.
For the $KT_0$ model, our algorithm explores the graph in
$O((m-n+1)n + n\log^2 n)$ time,
while for the $KT_1$ model, the exploration time is at most the required window size itself.
We then proved matching or nearly matching lower bounds.
For both $KT_0$ and $KT_1$, we showed that the minimum required window size is $\Omega(m)$.
For the exploration time, we proved lower bounds of $\Omega((m-n+1)n)$ in the $KT_0$ model and $\Omega(m)$ in the $KT_1$ model.

These results yield a nearly complete picture of deterministic single-agent exploration under $T$-interval connectivity.
In particular, when $m = n^{1+\Theta(1)}$, our upper and lower bounds on the minimum window size match asymptotically, giving a tight bound $\Theta(m)$.
Moreover, when parameterized solely by $n$, our bounds are tight: $\psi_k(n)=\Theta(n^2)$ for both $k \in \{0,1\}$.
For the exploration time, our results also imply tight bounds when parameterized only by $n$, namely $\Theta(n^3)$ for $KT_0$ and $\Theta(n^2)$ for $KT_1$.

Our work leaves several interesting directions for future research.
Most notably, it would be interesting to understand how much randomization can help.
If randomized algorithms are allowed, how small can the minimum required window size be made, and how much can the exploration time be improved?
More broadly, closing the remaining polylogarithmic gap in the general case and clarifying the exact advantage of additional local visibility also remain interesting open problems.

\clearpage


\appendix
\section{Supplemental Lemmas}

\begin{lemma}
\label{lem:inequality}
For any real number $x > 0$ and $n \ge 3$, we have $\ln n < x n^{1/(2x)}$.
\end{lemma}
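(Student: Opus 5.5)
We need to show $\ln n < x n^{1/(2x)}$ for every real $x>0$ and $n\ge 3$. The plan is to set $L=\ln n \ge \ln 3 > 1$ and rewrite the right-hand side as $x e^{L/(2x)}$. The claim then becomes $g(x) := x e^{L/(2x)} > L$ for all $x>0$. We minimize $g$ over $x>0$ by elementary calculus.

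First, $g$ is smooth on $(0,\infty)$, and
\[
g'(x) = e^{L/(2x)}\left(1 - \frac{L}{2x}\right).
\]
This is negative for $x<L/2$ and positive for $x>L/2$. Hence $g$ attains its global minimum on $(0,\infty)$ at $x^\ast=L/2$. One can also check the boundary behaviour directly: $g(x)\to\infty$ as $x\to 0^+$ because the exponential dominates, and $g(x)\to\infty$ as $x\to\infty$ because $g(x)\ge x$.

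Second, we evaluate the minimum:
\[
g(L/2) = \frac{L}{2}\, e^{1} = \frac{e}{2} L.
\]
Since $e/2>1$ and $L>0$, we get $g(x)\ge \frac{e}{2}L > L$ for every $x>0$. This gives the strict inequality $\ln n < x n^{1/(2x)}$. The argument uses only $L>0$, that is, $n>1$, so the hypothesis $n\ge 3$ is more than enough.

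There is no real obstacle here. The only point to be careful about is that the inequality must hold for every real $x>0$ rather than just integers, including very small $x$ (where $n^{1/(2x)}$ is huge) and large $x$ (where the factor $x$ is huge). The calculus argument covers the whole range uniformly.
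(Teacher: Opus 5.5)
Your proof is correct and is essentially the paper's argument: both reduce the claim to minimizing a one-variable function by elementary calculus. The paper substitutes $y=\ln n/(2x)$ to get the parameter-free inequality $2y<e^y$ and minimizes $e^y-2y$ at $y=\ln 2$, whereas you keep $L=\ln n$ fixed and minimize $x e^{L/(2x)}$ at $x=L/2$, obtaining the bound in the multiplicative form $x n^{1/(2x)}\ge \frac{e}{2}\ln n$.
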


\begin{proof}
For $n \ge 3$, let $y=\frac{\ln n}{2x} > 0$. Then
\[
\ln n < x n^{1/(2x)}
\quad\Longleftrightarrow\quad
2y<e^y.
\]
Define $h(y)=e^y-2y$. Then $h'(y)=e^y-2$, so $h$ attains its minimum on $(0,\infty)$ at $y=\ln 2$, where
\[
h(\ln 2)=2-2\ln 2>0.
\]
Hence $e^y>2y$ for all $y>0$, proving the claim.
\end{proof}

\clearpage

\bibliographystyle{alpha}
\bibliography{tvg}

\end{document}